\newtheorem{lemma}{Lemma}[section]
\newtheorem{theorem}{Theorem}[section]
\newtheorem{remark}{Remark}[section]
\begin{document}

\title{Improvements on the k-center problem for uncertain data}

\author{Sharareh Alipour   \and Amir Jafari}
\maketitle

\begin{abstract}

In real applications, there are situations where we need to model some problems based on uncertain data. This leads us to define an uncertain model for some classical geometric optimization problems and propose algorithms to solve them. 
In this paper, we study the $k$-center problem, for uncertain input.
In our setting, each uncertain point $P_i$  is located independently from other points in one of several possible locations $\{P_{i,1},\dots, P_{i,z_i}\}$ in a metric space with metric $d$, with specified probabilities and the goal is to compute $k$-centers $\{c_1,\dots, c_k\}$  that minimize the following expected cost
$$Ecost(c_1,\dots, c_k)=\sum_{R\in \Omega} prob(R)\max_{i=1,\dots, n}\min_{j=1,\dots k} d(\hat{P}_i,c_j)$$
here $\Omega$  is the probability space of all realizations 
$$R=\{\hat{P}_1,\dots, \hat{P}_n\}$$ of given uncertain points and 
$$prob(R)=\prod_{i=1}^n prob(\hat{P}_i).$$
 In restricted assigned version of this problem,
an assignment $A:\{P_1,\dots, P_n\}\rightarrow \{c_1,\dots, c_k\}$ is given for any choice of centers and the goal is to minimize 
$$Ecost_A(c_1,\dots, c_k)=\sum_{R\in \Omega} prob(R)\max_{i=1,\dots, n} d(\hat{P}_i,A(P_i)).$$ In unrestricted version, the assignment is not specified and the goal is to compute  $k$ centers $\{c_1,\dots, c_k\}$ and an assignment $A$ that minimize the above expected cost.

We give several improved constant approximation factor algorithms for the assigned versions of this problem in a Euclidean space and in a general metric space.
Our results significantly improve the results of \cite{guh} and generalize the results of \cite{wang} to any dimension.
Our approach is to replace a certain center point for each uncertain point and study the properties of these certain points.
The proposed algorithms are efficient and simple to implement.
\end{abstract}
\smallskip
\\
\noindent \textbf{Keywords.} $k$-center problem, uncertain points, approximation algorithm.

\section{Introduction}
It is not surprising that in many real-world applications, we face uncertainty about the data.
Database systems should be able to handle and correctly process these uncertain data. Most of the time, we need to deal with optimization problems in data bases, such as data integration, streaming, cluster computing and sensor network applications that involve parameters and inputs whose values are known only with some uncertainty\cite{guh}. So, an important challenge for database systems is to deal with large amount of data with uncertainty.

In this paper we focus on a classical geometric optimization problem, $k$-center problem, for uncertain data. First, we introduce the uncertainty models and  the previous works, then we propose our algorithms for these models. 

\subsection*{Problem Statement}
In a metric space $X$ with metric $d$, the $k$-center problem for a set of (certain) points $\{P_1,\dots,P_n\}$ in $X$, asks for $k$ center points $C=\{c_1,\dots, c_k\}$ in $X$ that minimize the following cost
\begin{center}
$cost(c_1,\dots, c_k)=\max_{i=1,\dots,n}$  $d(P_i, C),$
\end{center} 
where $d(P_i,C)=\min_{c\in C} d(P_i,c)$.
When the points $P_1,\dots,P_n$ are uncertain, each point has a finite number of possible locations independently from the other points with given probabilities. More precisely, we are given a set $D=\{D_1,\dots,D_n\}$ of $n$ discrete and independent probability distributions. The $i$-th distribution, $D_i$ is defined over a set of $z_i$ possible locations $P_{i1},\dots,P_{i,z_i}\in X$. A probability $p_{ij}$ is associated with each location such that $\sum_j p_{ij}=1$ for every $i\in[n]=\{1,\dots, n\}$ and $j\in\{1,\dots, z_i\}$. Thus, the probabilistic points can be considered to be independent random variables $X_i$. The locations together with the probabilities specify their distributions $Pr[X_i=P_{ij}]=p_{ij}$ for every $i \in [n]$ and $j\in [z_i]$. A probabilistic set $Y$, consisting of the probabilistic points, is therefore a random variable. Let $z=\max\{z_1,\dots, z_n\}$ be the maximum number of possibilities for uncertain points.

For simplicity, we use the notation $\hat{P}_i$ for a realization of the uncertain point $P_i$ and the $prob(\hat{P}_i)$ for its probability. We let $\Omega$ denote the probability space of all realizations $R=\{P_{1j_1},\dots P_{nj_n}\}$ with $prob(R)=\prod_{i=1}^n prob(P_{i,j_i})$.
\\
There are three known versions of the $k$-center problem for uncertain points based on the definition of the cost function.

\begin{itemize}
\item {\bf{Unassigned version:}} \\
Here the goal is to find $k$ centers $C=\{c_1, \cdots, c_k\}$ that minimize
\begin{center}
$Ecost(c_1, c_2,\dots, c_k)=\sum_{R\in \Omega} prob(R)\max_{i=1,\dots,n}d(\hat{P}_i,C)$.
\end{center}

\item {\bf Unrestricted assigned version:}\\
 Here, all realizations of an uncertain point $P_i$ are assigned to a center denoted by $A(P_i)$. In fact, all realizations of an uncertain point $P_i$ in the assigned version are in the cluster of the same center. Therefore, the goal is to find $k$ centers $\{c_1, \cdots, c_k\}$ and an assignment  $A:\{P_1, \cdots,  P_n\} \rightarrow  \{c_1, \cdots, c_k\}$ that minimize
\begin{center}
$Ecost_A(c_1, c_2,\dots, c_k)=\sum_{R\in \Omega} prob(R)\max_{i=1,\dots,n}d(\hat{P}_i,A(P_i))$.
\end{center}
\item {\bf Restricted assigned version:}  
\\
Here for any set of uncertain points $\{P_1,\dots, P_n\}$ and $k$ centers $\{c_1,\dots,c_k\}$ an assignment 
$$A:\{P_1,\dots, P_n\}\rightarrow \{c_1,\dots, c_k\}$$
 is given. The goal is to find $\{c_1,\dots, c_k\}$ that minimizes $Ecost_A(c_1,\dots, c_k)$

In this paper, we consider three assignments: the expected distance assignment that was first introduced in \cite{wang}, the 1-center assignments and the expected point assignment for a Euclidean space where both of them are new in this paper as far as we know.

In the expected distance assignment, each uncertain point $P_i$ is assigned to $$ED(P_i)=arg.\min_{Q\in \{c_1\dots,c_k\}} \sum_{\hat{P}_i\in D_i} prob(\hat{P}_i)d(\hat{P}_i,Q).$$
 
 In a Euclidean space, let
 \begin{flalign*}
&\bar{P}_i=\sum_{\hat{P}_i\in D_i}prob(\hat{P}_i)\hat{P}_i.\\
\end{flalign*}
In the expected point assignment, each uncertain point $P_i$ is assigned to 
$$EP(P_i)=arg.\min_{Q\in \{c_1\dots,c_k\}}d(\bar{P_i},Q).$$

Finally, in the 1-center assignment, let $\tilde{P}_i$ be the 1-center of the single uncertain point $P_i$. An uncertain point $P_i$ is assigned to 
\begin{flalign*}
&OC(P_i)=arg.\min_{Q\in \{c_1\dots,c_k\}}d(\tilde{P}_i,Q).\\
 \end{flalign*}
\end{itemize}

%

\subsection*{Related works}
The deterministic $k$-center problem is a classical problem that has been extensively studied. It is well known that the $k$-center problem is NP-hard even in the plane \cite{meg} and approximation algorithms have been proposed (e.g., see \cite{arya,bad,har}). Efficient algorithms were also given for some special cases, e.g., the smallest enclosing circle and its weighed version and discrete version \cite{dyer,lee,meg28}, the Fermat-Weber problem \cite{chan}, $k$-center on trees \cite{chan9,fred,meg30}. Refer to \cite{drez} for other variations of facility location problems. The deterministic $k$-center in one-dimensional space is solvable in $O(n\log n)$ time \cite{meg31}.
One of the most elegant approximation algorithms for $k$-center clustering is the 2-factor approximation
algorithm by Gonzalez \cite{gon} which can be made to run in $O(n\log k)$ time \cite{fed}. One of the fastest methods for
$k$-center clustering in 2 and 3 dimensions is by Aggarwal and Procopiuc \cite{aga} which uses a dynamic programming
approach to $k$-center clustering and whose running time is upper bounded by $O(n\log k)+(\frac{k}{\epsilon})^{O(k^{1-\frac{1}{d}})}$.
Another elegant solution to the $k$-center clustering problem was given by Badoiu et.a \cite{bad}. This algorithm gives a $(1+\epsilon)$-approximation factor algorithm which runs in $2^{O((k\log k)/\epsilon^2)}dn$ in $\mathbf{R}^d$.
Another algorithm based on coresets runs in $O(k^{n})$ \cite{kumar} and it is claimed that the running time is much less than the worst case and thus it's possible to solve some problems when $k$ is small (say $k < 5$).

Several recent works have dealt with clustering problems on probabilistic data. One approach was to generalize well-known heuristic algorithms to the uncertain setting. For example a clustering algorithm called DBSCAN \cite{est} was also modified to handle probabilistic data by Kriegel and Pfeifle \cite{kri1,kri2} and Xu and Li \cite{xu}. Refer to \cite{caga} for a survey on data mining of uncertain data.

Cormode and McGregor \cite{cor} introduced the study of probabilistic clustering problems. They developed approximation algorithms for the probabilistic settings of $k$-means, $k$-median as well as $k$-center clustering. They described a pair of bicriteria approximation algorithms, for inputs of a particular form; one of which achieves a $(1+\epsilon)$-approximation with a large blow up in the number of centers, and the other which achieves a constant factor approximation with only $2k$ centers. 

Guha and Muhagala \cite{guh} improved upon the previous work. They achieved $O(1)$-approximations in finite metric space, while preserving the number of centers both for assigned and unassigned version of the $k$-center problem.
More precisely, the approximation factor of their algorithm for unrestricted assigned version is $15(1+2\epsilon)$ and the running time of their algorithm is polynomial in input size and $\dfrac{1}{\epsilon}$.

Munteanu and et.al. presented the first polynomial time (1 +$ \epsilon$)-approximation algorithm for the probabilistic smallest enclosing ball problem with extensions to the streaming setting\cite{mun} .

Wang and Zhang \cite{wang}, introduced the restricted assigned version under the expected distance assignment. They solved the one-dimensional $k$-center problem, in $O(zn \log zn + n \log k \log n)$ time. If dimension is one and the $z$ locations of each uncertain point are sorted, they reduced the problem to a linear programming problem and thus solved the problem in $O(zn)$ time by applying a linear time algorithm.

Haung and Li \cite{lin} gave a PTAS for unassigned version of the probabilistic $k$-center problem in $\mathbf{R}^d$, when both $k$ and $d$ are constants.

\section{Main results}

In this paper, we propose several approximation algorithms for restricted and unrestricted assigned version of uncertain $k$-center problem. In this section, we state the main results and in the next section, we give their proofs.

Our main approach is to replace each uncertain point $P$ with its expected point, $\bar{P}$, in the case of the Euclidean space or its 1-center, $\tilde{P}$,  in the case of the general metric space.
Next, we compute the $k$-center for the described certain points and prove that this solution gives an approximation solution for the uncertain points.
Note that there are efficient $(1+\epsilon)$-approximation algorithms for the certain $k$-center problem in the literature.
\subsection*{1-center in Euclidean space}
The first theorem gives a 2-approximation solution for the 1-center problem in the Euclidean space. 
\begin{theorem} \label{1}
Let $P_1,\dots, P_n$ be a set of uncertain points in the Euclidean space, and $$\bar{P}_1=\sum_{\hat{P}_1\in D_1} prob(\hat{P}_1)\hat{P}_1$$ be the expected point of $P_1$. Then $\bar{P}_1$ is a 2-approximation solution for the 1-center problem for $P_1,\dots, P_n$.
\end{theorem}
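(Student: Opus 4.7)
The plan is to prove the two-sided bound directly from convexity of the Euclidean distance and the triangle inequality, without any case analysis on the realization space. Let $c^\ast$ denote the optimal 1-center and $\text{OPT} = Ecost(c^\ast)$; note that for $k=1$ the assigned and unassigned formulations coincide, so there is no ambiguity in what $Ecost$ means.

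First I would establish the lower bound $\text{OPT} \ge d(\bar P_1, c^\ast)$. The key observation is that the maximum over all $n$ uncertain points dominates the single term for $P_1$:
\[
Ecost(c^\ast) \;=\; \mathbb{E}\Bigl[\max_{i=1,\dots,n} d(\hat P_i, c^\ast)\Bigr] \;\ge\; \mathbb{E}\bigl[d(\hat P_1, c^\ast)\bigr].
\]
Because $P_1$ is independent of the other uncertain points, this expectation over the full realization space collapses to an expectation over $D_1$ alone. Now $d(\,\cdot\,,c^\ast)$ is a convex function on Euclidean space, so Jensen's inequality gives
\[
\mathbb{E}\bigl[d(\hat P_1, c^\ast)\bigr] \;\ge\; d\bigl(\mathbb{E}[\hat P_1],\, c^\ast\bigr) \;=\; d(\bar P_1, c^\ast).
\]

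Second I would derive the upper bound using the triangle inequality. For every realization $R = \{\hat P_1, \dots, \hat P_n\}$ and every index $i$,
\[
d(\hat P_i, \bar P_1) \;\le\; d(\hat P_i, c^\ast) + d(c^\ast, \bar P_1),
\]
so taking the maximum over $i$ on both sides and then expectation over $R$ yields
\[
Ecost(\bar P_1) \;\le\; Ecost(c^\ast) + d(c^\ast, \bar P_1) \;=\; \text{OPT} + d(c^\ast, \bar P_1).
\]
Combining this with the lower bound from the first step gives $Ecost(\bar P_1) \le 2\,\text{OPT}$, as required.

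I do not expect a serious obstacle: the whole argument is a single application of Jensen plus a single application of the triangle inequality, and both steps are driven by the Euclidean structure (convexity of distance to a fixed point) rather than by the combinatorics of the distributions. The only conceptual point worth pausing on is that the lower bound is obtained by dropping $n-1$ of the $n$ terms inside the maximum; one might worry this is wasteful, but it is exactly what is needed to produce the $\bar P_1$ that the upper bound uses, and the factor $2$ cannot be improved by a tighter choice of test point since both inequalities are individually tight.
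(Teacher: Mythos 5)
Your proof is correct and follows essentially the same route as the paper: the upper bound via the triangle inequality through $c^\ast$, and the lower bound $d(\bar P_1,c^\ast)\le Ecost(c^\ast)$ obtained by dropping all but the first term of the maximum (the paper's Lemma~\ref{exp}) combined with the convexity/Jensen step (the paper's Lemma~\ref{expp}, which is proved there by exactly the convex-combination triangle inequality you invoke). The only cosmetic difference is that you phrase the key inequality as Jensen's inequality while the paper states it as a standalone lemma about the expected point.
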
 
Note that, we can compute $\bar{P}_1$ in $O(z)$ time which is independent of $n$.

\subsection*{Restricted assigned k-center problem in the Euclidean space}
For the restricted assigned $k$-center problem in the Euclidean space, we have the following theorem.

\begin{theorem}\label{2}
For a set of uncertain points $P_1,\dots, P_n$ in a Euclidean space, let $c_1,\dots, c_k$ be $(1+\epsilon)$-approximation solution for the $k$-center problem for $\bar{P}_1,\dots, \bar{P}_n$. Let $opt_{ED}$ and $op_{EP}$ be the minimum expected costs under the expected distance assignment and expected point assignment, respectively. Then,
\begin{flalign*}
&Ecost_{ED}(c_1,\dots,c_k)\le (5+\epsilon) opt_{ED}\\
\end{flalign*}
and
\begin{flalign*}
&Ecost_{EP}(c_1,\dots, c_k)\le (3+\epsilon) opt_{EP}.\\
\end{flalign*}
\end{theorem}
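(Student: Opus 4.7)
The plan is to reduce both inequalities to a common triangle-inequality decomposition and then branch on the assignment. For any realization $R$ and any assignment $A \in \{EP, ED\}$, inserting $\bar{P}_i$ in the triangle inequality gives
$$\max_i d(\hat{P}_i, A(P_i)) \le \max_i d(\hat{P}_i, \bar{P}_i) + \max_i d(\bar{P}_i, A(P_i)),$$
and taking expectation reduces the theorem to bounding $E[\max_i d(\hat{P}_i, \bar{P}_i)]$ and $\max_i d(\bar{P}_i, A(P_i))$ separately.

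To bound the first term uniformly in $A$, I would fix an optimal solution $(c_1^*, \dots, c_k^*)$ with corresponding assignment $A^*$ achieving $opt_A$, and apply the triangle inequality $d(\hat{P}_i, \bar{P}_i) \le d(\hat{P}_i, A^*(P_i)) + d(A^*(P_i), \bar{P}_i)$. Taking $\max_i$ and then expectation, the first piece contributes at most $opt_A$ by definition; for the second, Jensen's inequality gives $d(\bar{P}_i, A^*(P_i)) \le E[d(\hat{P}_i, A^*(P_i))] \le E[\max_j d(\hat{P}_j, A^*(P_j))] = opt_A$. This yields $E[\max_i d(\hat{P}_i, \bar{P}_i)] \le 2\, opt_A$, and as a byproduct $opt^* \le opt_A$, where $opt^*$ denotes the optimal deterministic $k$-center cost on $\bar{P}_1,\dots,\bar{P}_n$.

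For the expected point assignment, the remaining term is immediate: $\max_i d(\bar{P}_i, EP(P_i))$ is exactly the cost of $c_1,\dots,c_k$ on the $\bar{P}_i$, hence at most $(1+\epsilon) opt^* \le (1+\epsilon) opt_{EP}$, and adding gives $Ecost_{EP} \le (3+\epsilon) opt_{EP}$. The expected distance case is the main obstacle because $ED(P_i)$ need not be the closest center to $\bar{P}_i$. My workaround is to apply Jensen in the opposite direction and then invoke the minimality of $ED$ among the chosen centers:
$$d(\bar{P}_i, ED(P_i)) \le E[d(\hat{P}_i, ED(P_i))] \le E[d(\hat{P}_i, EP(P_i))] \le E[d(\hat{P}_i, \bar{P}_i)] + d(\bar{P}_i, EP(P_i)).$$
Applying Theorem~\ref{1} to each single uncertain point $P_i$ (the $n=1$ specialization) bounds $E[d(\hat{P}_i, \bar{P}_i)]$ by $2 r_i$, where $r_i$ is the 1-center cost of $P_i$; and $\max_i r_i \le opt_{ED}$ by the same max-before-$E$ plus Jensen chain used above. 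Combined with $d(\bar{P}_i, EP(P_i)) \le (1+\epsilon) opt_{ED}$, this gives $\max_i d(\bar{P}_i, ED(P_i)) \le (3+\epsilon) opt_{ED}$, and adding the $2\, opt_{ED}$ from the first term yields $Ecost_{ED} \le (5+\epsilon) opt_{ED}$. The only real care needed is to keep track of the inequalities $opt^* \le opt_A$ and $\max_i r_i \le opt_A$, which both follow from the same max-then-$E$ and Jensen manipulation.
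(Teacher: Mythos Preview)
Your proof is correct and follows essentially the same route as the paper: the same triangle-inequality split through $\bar{P}_i$, the same Jensen step $d(\bar{P}_i,c)\le E[d(\hat{P}_i,c)]$, and the same use of the $(1+\epsilon)$-approximation on the deterministic $\bar{P}_i$'s. The only cosmetic difference is that you route the bound $E[d(\hat{P}_i,\bar{P}_i)]\le 2\,opt_A$ through Theorem~\ref{1} and the single-point radius $r_i$, whereas the paper proves this directly as a lemma; the underlying computation is identical.
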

\subsection*{Unrestricted assigned k-center problem}
For unrestricted assigned $k$-center problem, we prove a stronger approximation algorithm for the Euclidean case and a slightly weaker one for a general metric space. 

For the unrestricted version, we present theorems that indicate the relation between the restricted assignment and unrestricted assignment.
 Note that in the unrestricted version, we have to compute the optimal $k$ centers and also the optimal assignment.

\begin{theorem}\label{3}
For a set of uncertain points $P_1,\dots, P_n$ in a metric space, the minimum expected cost under the expected distance assignment is a $3$-approximation for the minimum expected cost for the unrestricted assigned $k$-center problem.
\end{theorem}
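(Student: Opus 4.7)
The plan is to fix an optimal unrestricted solution $(c_1^\ast,\dots,c_k^\ast, A^\ast)$ with cost $opt_U$, and then evaluate the cost of the same centers $c_1^\ast,\dots,c_k^\ast$ paired with the expected-distance assignment $ED$ they induce. Since $opt_{ED}$ minimizes over all center choices, we automatically have $opt_{ED} \le Ecost_{ED}(c_1^\ast,\dots,c_k^\ast)$, so it suffices to show that swapping $A^\ast$ for $ED$ at the same centers loses at most a factor of $3$.

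For each uncertain point $P_i$, write $\rho_i(Q) = \sum_{\hat{P}_i\in D_i} prob(\hat{P}_i)\,d(\hat{P}_i,Q)$ for the expected distance from $P_i$ to a point $Q$. The key geometric step is a double use of the triangle inequality. First, for any realization $\hat{P}_i$,
\[
d(\hat{P}_i, ED(P_i)) \;\le\; d(\hat{P}_i, A^\ast(P_i)) + d(A^\ast(P_i), ED(P_i)).
\]
Second, for any $\hat{P}_i' \in D_i$, $d(A^\ast(P_i), ED(P_i)) \le d(A^\ast(P_i),\hat{P}_i') + d(\hat{P}_i', ED(P_i))$; averaging this over $\hat{P}_i'\sim D_i$ gives $d(A^\ast(P_i), ED(P_i)) \le \rho_i(A^\ast(P_i)) + \rho_i(ED(P_i)) \le 2\rho_i(A^\ast(P_i))$, where the last inequality uses that $ED(P_i)$ minimizes $\rho_i$ over $\{c_1^\ast,\dots,c_k^\ast\}$. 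Combining,
\[
d(\hat{P}_i, ED(P_i)) \;\le\; d(\hat{P}_i, A^\ast(P_i)) + 2\rho_i(A^\ast(P_i)).
\]

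Now take the maximum over $i$ (using $\max_i(a_i+b_i)\le \max_i a_i + \max_i b_i$) and then the expectation over realizations $R\in\Omega$:
\[
Ecost_{ED}(c_1^\ast,\dots,c_k^\ast) \;\le\; Ecost_{A^\ast}(c_1^\ast,\dots,c_k^\ast) + 2\max_i \rho_i(A^\ast(P_i)).
\]
The final step observes that for each $i$, $\rho_i(A^\ast(P_i)) = E[d(\hat{P}_i,A^\ast(P_i))] \le E[\max_j d(\hat{P}_j,A^\ast(P_j))] = opt_U$, so $\max_i \rho_i(A^\ast(P_i)) \le opt_U$. Hence $opt_{ED} \le opt_U + 2\,opt_U = 3\,opt_U$.

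I expect the triangle-inequality steps themselves to be routine; the main subtlety is the cleanness of the second inequality, which relies on the fact that $\rho_i(A^\ast(P_i))$ is a deterministic quantity (a property of $P_i$ alone, not of the realization) so it passes through the $\max_i$ and the expectation without interacting with the randomness. That the centerwise expected distance $\rho_i(A^\ast(P_i))$ is globally bounded by $opt_U$ is the clincher, and it is this bound that prevents the argument from degrading when switching to a metric-space setting.
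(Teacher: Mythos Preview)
Your proof is correct and follows essentially the same route as the paper: fix the optimal unrestricted centers $c_1^\ast,\dots,c_k^\ast$, split $d(\hat P_i,ED(P_i))$ via $A^\ast(P_i)$, bound the deterministic term $d(A^\ast(P_i),ED(P_i))$ by $\rho_i(A^\ast(P_i))+\rho_i(ED(P_i))\le 2\rho_i(A^\ast(P_i))$, and finish with $\rho_i(A^\ast(P_i))\le opt_U$, which is precisely the paper's Lemma~\ref{exp}. The only cosmetic difference is that the paper singles out the maximizing index (relabelled $P_1$) before doing the averaging step, whereas you carry the general index $i$ throughout.
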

So, any algorithm for the restricted assigned version under the expected point assignment gives a 3-approximation solution for the unrestricted assigned version. Since, the restricted assigned version under the expected distance assignment for $\mathbf{R}^1$ has exact solution \cite{wang}, so we have a 3-approximation solution for the unrestricted assigned version in $\mathbf{R}^1$.
For higher dimensions, we present the following theorems.
\begin{theorem}\label{4}
For a set of uncertain points $P_1,\dots, P_n$ in a Euclidean space, let $c_1,\dots, c_k$ be $(1+\epsilon)$-approximation solution for the $k$-center problem for $\bar{P}_1,\dots, \bar{P}_n$. Let $c_1^*,\dots, c_k^*$ and an assignment $A$ be the optimal solution for the unrestricted assigned $k$-center problem for $P_1,\dots, P_n$. Then,
\begin{flalign*}
&Ecost_{ED}(c_1,\dots,c_k)\le (5+\epsilon) Ecost_A(c_1^*,\dots, c_k^*).\\
\end{flalign*}
\end{theorem}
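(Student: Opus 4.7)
The strategy is to mimic the argument behind Theorem~\ref{2}, but to express everything directly in terms of $Ecost_A(c_1^*,\dots,c_k^*)$ rather than $opt_{ED}$, so as to avoid the factor of~$3$ that chaining through Theorem~\ref{3} would incur. The central analytical tool is Jensen's inequality in the Euclidean metric: for any fixed $q$, the map $x\mapsto \|x-q\|$ is convex, so $d(\bar{P}_i,q)\le E[d(\hat{P}_i,q)]$, since $\bar{P}_i=E[\hat{P}_i]$. Throughout I write $a_i^*:=A(P_i)$, and both $EP(P_i)$ and $ED(P_i)$ are computed with respect to the approximate centers $\{c_1,\dots,c_k\}$.

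Step~1 is to bound the optimal $k$-center radius $r^\star$ for the certain set $\bar{P}_1,\dots,\bar{P}_n$ by the unrestricted optimum. Using $c_1^*,\dots,c_k^*$ as candidate centers for the expected points and assigning $\bar{P}_i$ to $a_i^*$, Jensen together with $\max_i E[\cdot]\le E[\max_i\cdot]$ yields
\[
r^\star\le\max_i d(\bar{P}_i,a_i^*)\le\max_i E[d(\hat{P}_i,a_i^*)]\le E\bigl[\max_i d(\hat{P}_i,a_i^*)\bigr]=Ecost_A(c_1^*,\dots,c_k^*).
\]
The $(1+\epsilon)$-approximation guarantee then gives $d(\bar{P}_i,EP(P_i))\le(1+\epsilon)\,Ecost_A(c_1^*,\dots,c_k^*)$ for every $i$.

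Step~2 handles $d(\hat{P}_i,ED(P_i))$ per realization by the triangle inequality through $\bar{P}_i$: $d(\hat{P}_i,ED(P_i))\le d(\hat{P}_i,\bar{P}_i)+d(\bar{P}_i,ED(P_i))$. Since $ED(P_i)$ is deterministic, I pivot on it via Jensen and its defining optimality,
\[
d(\bar{P}_i,ED(P_i))\le E[d(\hat{P}_i,ED(P_i))]\le E[d(\hat{P}_i,EP(P_i))]\le E[d(\hat{P}_i,\bar{P}_i)]+(1+\epsilon)\,Ecost_A(c_1^*,\dots,c_k^*).
\]
Taking $\max_i$ and then $E[\cdot]$, and exchanging $\max_i E[\cdot]\le E[\max_i\cdot]$ on the first term, I arrive at
\[
Ecost_{ED}(c_1,\dots,c_k)\le 2\,E\bigl[\max_i d(\hat{P}_i,\bar{P}_i)\bigr]+(1+\epsilon)\,Ecost_A(c_1^*,\dots,c_k^*).
\]

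Step~3 controls $E[\max_i d(\hat{P}_i,\bar{P}_i)]$. A second triangle inequality $d(\hat{P}_i,\bar{P}_i)\le d(\hat{P}_i,a_i^*)+d(a_i^*,\bar{P}_i)$ combined with the Jensen estimate $d(a_i^*,\bar{P}_i)\le E[d(\hat{P}_i,a_i^*)]$, followed by taking $\max_i$ and then $E[\cdot]$, yields $E[\max_i d(\hat{P}_i,\bar{P}_i)]\le 2\,Ecost_A(c_1^*,\dots,c_k^*)$. Substituting into the inequality of Step~2 gives the claimed $(5+\epsilon)$ bound. The main obstacle I anticipate is the careful bookkeeping between per-realization and expectation-based inequalities: $ED(P_i)$ is defined as a minimizer of an \emph{expected} distance, not a pointwise one, so Jensen is required to convert $d(\bar{P}_i,ED(P_i))$ into $E[d(\hat{P}_i,ED(P_i))]$, and the exchanges $\max_i E[\cdot]\le E[\max_i\cdot]$ must be ordered so that every term can be absorbed by either $Ecost_A(c_1^*,\dots,c_k^*)$ or $(1+\epsilon)\,Ecost_A(c_1^*,\dots,c_k^*)$, with total multiplicities matching $5+\epsilon$.
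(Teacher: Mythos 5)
Your proof is correct and yields the stated $(5+\epsilon)$ bound, but it routes the estimate differently from the paper. The paper's proof of Theorem~\ref{4} opens with the triangle inequality through $A(P_i)$, i.e.\ $d(\hat{P}_i,ED(P_i))\le d(\hat{P}_i,A(P_i))+d(A(P_i),ED(P_i))$, extracts one copy of $Ecost_A(c_1^*,\dots,c_k^*)$ immediately, and then unwinds the remaining term $d(A(P_1),ED(P_1))$ through a chain $A(P_1)\to\bar{P}_1\to ED(P_1)\to c_1\to A(P_1)\to\bar{P}_1\to c_1$ (where $c_1$ is the center closest to $\bar P_1$, i.e.\ your $EP(P_1)$), collecting $1+1+1+1+(1+\epsilon)$ copies of the optimum. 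You instead open with the pivot through $\bar{P}_i$, bound $d(\bar{P}_i,ED(P_i))$ via Jensen and the defining optimality of $ED$ against $EP(P_i)$, and absorb the two resulting $d(\hat{P}_i,\bar{P}_i)$-type terms with the bound $E[\max_i d(\hat{P}_i,\bar{P}_i)]\le 2\,Ecost_A(c_1^*,\dots,c_k^*)$, which is exactly the paper's Lemma~\ref{two} applied to the centers $c_1^*,\dots,c_k^*$ with assignment $A$; your Step~1 is likewise the content of Lemma~\ref{three}. So the ingredients coincide (Lemma~\ref{expp}, Lemma~\ref{exp}, the optimality of $ED$, the $(1+\epsilon)$ guarantee), but your decomposition is structurally the one the paper uses for Theorem~\ref{2}, transplanted to the unrestricted setting. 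What your version buys is uniformity and reuse: every inequality is stated per index $i$ rather than after singling out an argmax index, and the $4\,Ecost_A+(1+\epsilon)\,Ecost_A$ accounting is transparent; what the paper's version buys is that it never needs the two-sided bound of Lemma~\ref{two} and parallels its Theorem~\ref{3} argument. Both are valid; the constants agree.
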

If, in the above theorem, instead of expected distance assignment we use the expected point assignment, then we get a better approximation factor.

\begin{theorem}\label{5}
For a set of uncertain points $P_1,\dots, P_n$ in a Euclidean space, let $c_1,\dots, c_k$ be $(1+\epsilon)$-approximation solution for the $k$-center problem for $\bar{P}_1,\dots, \bar{P}_n$. Let $c_1^*,\dots, c_k^*$ and an assignment $A$ be the optimal solution for the unrestricted assigned $k$-center problem for $P_1,\dots, P_n$. Then,
\begin{flalign*}
&Ecost_{EP}(c_1,\dots,c_k)\le (3+\epsilon) Ecost_A(c_1^*,\dots, c_k^*).\\
\end{flalign*}
\end{theorem}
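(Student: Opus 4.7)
The plan is to use the optimal unrestricted solution $(c_1^*,\dots,c_k^*,A)$ as a witness that the expected points $\bar{P}_1,\dots,\bar{P}_n$ can themselves be covered by $k$ balls of radius at most $\mathrm{OPT}=Ecost_A(c_1^*,\dots,c_k^*)$, and then to propagate this covering back to the realizations through two applications of the triangle inequality. The Euclidean hypothesis enters only once, through Jensen's inequality applied to the convex function $x\mapsto d(x,c)$, and this is precisely what allows the factor $(3+\epsilon)$ to be sharper than the $(5+\epsilon)$ factor in Theorem~\ref{4}.

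Write $A(P_i)=c^*_{j(i)}$. First I would show, by Jensen's inequality and the trivial bound,
\begin{equation*}
d(\bar{P}_i,c^*_{j(i)})=d\bigl(E[\hat{P}_i],c^*_{j(i)}\bigr)\le E\bigl[d(\hat{P}_i,c^*_{j(i)})\bigr]\le E\Bigl[\max_\ell d(\hat{P}_\ell,A(P_\ell))\Bigr]=\mathrm{OPT},
\end{equation*}
so $\{c_j^*\}$ is a feasible certain $k$-center solution of radius $\le\mathrm{OPT}$ for $\{\bar{P}_i\}$ and the given $(1+\epsilon)$-approximation $\{c_j\}$ satisfies $d(\bar{P}_i,EP(P_i))\le(1+\epsilon)\,\mathrm{OPT}$ for every $i$, since $EP(P_i)$ is by definition the nearest $c_j$ to $\bar{P}_i$. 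The triangle inequality then gives, pointwise on every realization $R$,
\begin{equation*}
d(\hat{P}_i,EP(P_i))\le d(\hat{P}_i,\bar{P}_i)+d(\bar{P}_i,EP(P_i))\le d(\hat{P}_i,\bar{P}_i)+(1+\epsilon)\,\mathrm{OPT}.
\end{equation*}
Taking the maximum over $i$ and the expectation over $R$ reduces the target bound to $E[\max_i d(\hat{P}_i,\bar{P}_i)]\le 2\,\mathrm{OPT}$. To get this I would use one more triangle inequality $d(\hat{P}_i,\bar{P}_i)\le d(\hat{P}_i,A(P_i))+d(A(P_i),\bar{P}_i)$, combined with the bound $d(A(P_i),\bar{P}_i)\le\mathrm{OPT}$ just established, to obtain $\max_i d(\hat{P}_i,\bar{P}_i)\le \max_i d(\hat{P}_i,A(P_i))+\mathrm{OPT}$; passing to expectations gives the $2\,\mathrm{OPT}$ bound. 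Summing the two contributions yields the advertised $(3+\epsilon)\,\mathrm{OPT}$.

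The only delicate step is the opening Jensen argument, since the inequality $d(\bar{P}_i,c)\le E[d(\hat{P}_i,c)]$ really does require convexity of the distance and fails in a general metric space; this is exactly why a weaker factor must be paid in the general-metric companion results. Everything else is triangle inequality, linearity and monotonicity of expectation, and the assumed $(1+\epsilon)$-approximation guarantee for the certain $k$-center problem, so no additional machinery beyond these ingredients is required.
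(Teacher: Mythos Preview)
Your proof is correct and follows essentially the same approach as the paper: you inline what the paper packages as Lemma~\ref{two} (the bound $E[\max_i d(\hat{P}_i,\bar{P}_i)]\le 2\,\mathrm{OPT}$ via triangle inequality through $A(P_i)$ and the Jensen bound $d(\bar{P}_i,A(P_i))\le\mathrm{OPT}$) and Lemma~\ref{three} (the bound $\max_i d(\bar{P}_i,EP(P_i))\le(1+\epsilon)\,\mathrm{OPT}$ via the $(1+\epsilon)$-approximation guarantee for the certain $k$-center problem on the $\bar{P}_i$). The decomposition, the order of the estimates, and the single use of Euclidean convexity are all the same.
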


In a general metric space, we do not have the expected point construction and instead we use the 1-center $\tilde{P}_i$ of the single uncertain point $P_i$.
\begin{theorem}\label{6}
For a set of uncertain points $P_1,\dots, P_n$ in a metric space, let $c_1,\dots, c_k$ be $(1+\epsilon)$-approximation solution for the $k$-center problem for $\tilde{P}_1,\dots, \tilde{P}_n$. Let $c_1^*,\dots, c_k^*$ and an assignment $A$ be the optimal solution for the unrestricted assigned $k$-center problem for $P_1,\dots, P_n$. Then,
\begin{flalign*}
&Ecost_{ED}(c_1,\dots,c_k)\le (7+2\epsilon) Ecost_A(c_1^*,\dots, c_k^*).\\
\end{flalign*}
\end{theorem}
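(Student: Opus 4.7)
The plan is to decompose $d(\hat{P}_i, ED(P_i))$ through the intermediate point $\tilde{P}_i$ using the triangle inequality
\[
d(\hat{P}_i, ED(P_i)) \le d(\hat{P}_i, \tilde{P}_i) + d(\tilde{P}_i, ED(P_i)),
\]
so that after taking $\max_i$ and averaging over realizations $R\in\Omega$, $Ecost_{ED}(c_1,\dots,c_k)$ splits into a random ``data-to-1-center'' piece $\sum_{R} prob(R)\max_i d(\hat{P}_i,\tilde{P}_i)$ and a deterministic ``1-center-to-chosen-center'' piece $\max_i d(\tilde{P}_i, ED(P_i))$. Write $opt := Ecost_A(c_1^*,\dots,c_k^*)$, $r_i^* := \sum prob(\hat{P}_i)\, d(\hat{P}_i,\tilde{P}_i)$, and $\alpha_i := \sum prob(\hat{P}_i)\, d(\hat{P}_i, A(P_i))$. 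Since $A(P_i)$ competes with $\tilde{P}_i$ in the 1-center problem for $P_i$, and since the $\max_i$ inside the realization sum dominates any single term, we have $r_i^* \le \alpha_i \le opt$ for every $i$; these two inequalities are the workhorses tying everything back to $opt$.

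For the first piece I would first bound the deterministic quantity $d(A(P_i),\tilde{P}_i)$ by averaging the triangle inequality $d(A(P_i),\tilde{P}_i) \le d(A(P_i),\hat{P}_i) + d(\hat{P}_i,\tilde{P}_i)$ against the distribution of $P_i$, which yields $d(A(P_i),\tilde{P}_i) \le \alpha_i + r_i^* \le 2\,opt$; substituting into $d(\hat{P}_i,\tilde{P}_i) \le d(\hat{P}_i,A(P_i)) + d(A(P_i),\tilde{P}_i)$ and then averaging over realizations gives $\sum_{R}prob(R)\max_i d(\hat{P}_i,\tilde{P}_i) \le 3\,opt$. For the second piece, the key observation is that $\{c_1^*,\dots,c_k^*\}$ is a feasible $k$-center solution on the certain point set $\{\tilde{P}_1,\dots,\tilde{P}_n\}$ with radius $\max_i d(\tilde{P}_i,A(P_i)) \le 2\,opt$ by the same averaging trick, so the $(1+\epsilon)$-approximate radius $r := \max_i \min_j d(\tilde{P}_i,c_j)$ satisfies $r \le 2(1+\epsilon)\,opt$. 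Letting $c_{j(i)}$ be the center of $\{c_1,\dots,c_k\}$ closest to $\tilde{P}_i$, the optimality defining $ED(P_i)$ gives $\sum prob(\hat{P}_i)\, d(\hat{P}_i, ED(P_i)) \le \sum prob(\hat{P}_i)\, d(\hat{P}_i, c_{j(i)}) \le r_i^* + r$, and a second triangle step averaged through a realization of $P_i$ yields $d(\tilde{P}_i, ED(P_i)) \le 2r_i^* + r \le (4+2\epsilon)\,opt$. Adding the two pieces produces exactly the advertised $(7+2\epsilon)\,opt$.

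The main obstacle is precisely that $ED(P_i)$ minimizes expected distance rather than per-realization distance, so $d(\hat{P}_i, ED(P_i))$ cannot be compared to $d(\hat{P}_i, c_{j(i)})$ pointwise and the cleaner ``replace by the closest center'' argument available for the Euclidean expected-point assignment in Theorems~\ref{4} and~\ref{5} is not applicable here. Routing through $\tilde{P}_i$ is what makes the proof go through: the leftover gap $d(\tilde{P}_i, ED(P_i))$ is deterministic, so it need only be bounded in expectation, where the optimality defining $ED$ is the natural tool. This detour costs an extra $r_i^*$ on each side of the triangle and is exactly what accounts for the jump from the factor $5$ in Theorem~\ref{5} up to $7$ in the general-metric setting.
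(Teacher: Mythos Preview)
Your argument is correct and follows essentially the same route as the paper. The paper isolates the bounds $\sum_R prob(R)\max_i d(\hat P_i,\tilde P_i)\le 3\,opt$ and $cost(c_1^*,\dots,c_k^*)\le 2\,opt$ (for the certain points $\tilde P_i$) as separate lemmas, whereas you prove these inline via the same averaged triangle inequalities and the same use of the defining optimalities of $\tilde P_i$ and of $ED(P_i)$; the decomposition through $\tilde P_i$ and the final $3+(4+2\epsilon)$ accounting are identical.
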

If, in the above theorem, instead of expected distance assignment we use the 1-center assignment, then we get a better approximation factor.

\begin{theorem}\label{7}
For a set of uncertain points $P_1,\dots, P_n$ in a metric space, let $c_1,\dots, c_k$ be $(1+\epsilon)$-approximation solution for the $k$-center problem for $\tilde{P}_1,\dots, \tilde{P}_n$. Let $c_1^*,\dots, c_k^*$ and an assignment $A$ be the optimal solution for the unrestricted assigned $k$-center problem for $P_1,\dots, P_n$. Then
\begin{flalign*}
&Ecost_{OC}(c_1,\dots,c_k)\le (5+2\epsilon) Ecost_A(c_1^*,\dots, c_k^*).\\
\end{flalign*}
\end{theorem}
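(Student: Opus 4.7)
The plan is to mirror the argument that presumably proves Theorem~\ref{6}, but with the tighter direct bound that comes from using the $OC$ assignment instead of the expected-distance assignment. The starting observation is that, by the triangle inequality, for every realization $R$ and every $i$,
\begin{equation*}
d(\hat{P}_i, OC(P_i)) \;\le\; d(\hat{P}_i, \tilde{P}_i) + d(\tilde{P}_i, OC(P_i)),
\end{equation*}
so it suffices to bound $E[\max_i d(\hat{P}_i,\tilde{P}_i)]$ and $\max_i d(\tilde{P}_i, OC(P_i))$ separately in terms of $Ecost_A(c_1^*,\dots,c_k^*)$, which I will abbreviate as $opt_A$.

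The technical pivot is the following lemma-style observation, which I will state and prove first: for every $i$ and every point $Q$ of the metric space,
\begin{equation*}
d(\tilde{P}_i, Q) \;\le\; 2\, E\bigl[d(\hat{P}_i, Q)\bigr].
\end{equation*}
This follows because $d(\tilde P_i, Q) \le E[d(\tilde P_i,\hat P_i)] + E[d(\hat P_i,Q)]$ by the triangle inequality in expectation, and $E[d(\tilde P_i,\hat P_i)] \le E[d(\hat P_i,Q)]$ by the defining optimality of $\tilde P_i$ as the $1$-center of $P_i$. Applying this with $Q = c^*_{A(P_i)}$ and taking the maximum over $i$, and then using that $E[d(\hat P_i, c^*_{A(P_i)})] \le E[\max_i d(\hat P_i, c^*_{A(P_i)})] = opt_A$, yields $\max_i d(\tilde P_i, c^*_{A(P_i)}) \le 2\,opt_A$.

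Next I bound the second piece. Since $c_1^*,\dots,c_k^*$ are a feasible solution for the certain $k$-center problem on $\tilde P_1,\dots,\tilde P_n$, the optimum of that problem satisfies $opt_T \le \max_i d(\tilde P_i, c^*_{A(P_i)}) \le 2\,opt_A$. Because $c_1,\dots,c_k$ is a $(1+\epsilon)$-approximation on the $\tilde P_i$'s,
\begin{equation*}
\max_i d(\tilde{P}_i, OC(P_i)) \;=\; \max_i\min_j d(\tilde P_i, c_j) \;\le\; (1+\epsilon)\,opt_T \;\le\; 2(1+\epsilon)\,opt_A.
\end{equation*}
For the first piece, another application of the triangle inequality through $c^*_{A(P_i)}$ gives, in every realization, $d(\hat{P}_i,\tilde{P}_i) \le d(\hat{P}_i, c^*_{A(P_i)}) + d(c^*_{A(P_i)},\tilde{P}_i)$. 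Taking max over $i$ then expectation, and reusing $\max_i d(\tilde P_i, c^*_{A(P_i)}) \le 2\,opt_A$, yields $E[\max_i d(\hat P_i,\tilde P_i)] \le opt_A + 2\,opt_A = 3\,opt_A$.

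Finally I combine: $Ecost_{OC}(c_1,\dots,c_k) \le E[\max_i d(\hat P_i,\tilde P_i)] + \max_i d(\tilde P_i, OC(P_i)) \le 3\,opt_A + 2(1+\epsilon)\,opt_A = (5+2\epsilon)\,opt_A$. The main obstacle, and the reason I expect this proof to be short but somewhat subtle, is the inequality $d(\tilde P_i, Q) \le 2E[d(\hat P_i, Q)]$: it is the only place where the $1$-center definition of $\tilde P_i$ is used, and it is what converts a worst-case certain-center distance into an expected uncertain distance at the cost of a factor $2$. Everything else is bookkeeping with the triangle inequality and the swap $E[\max]\ge\max E$.
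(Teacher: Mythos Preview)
Your proof is correct and follows essentially the same route as the paper: decompose via $\tilde P_i$ with the triangle inequality, bound $E[\max_i d(\hat P_i,\tilde P_i)]\le 3\,opt_A$ (the paper's Lemma~\ref{four}) and $\max_i d(\tilde P_i,OC(P_i))\le (2+2\epsilon)\,opt_A$ (the paper's Lemma~\ref{five} plus the $(1+\epsilon)$-approximation). Your only packaging difference is that you factor out the inequality $d(\tilde P_i,Q)\le 2\,E[d(\hat P_i,Q)]$ as a single reusable pivot, whereas the paper re-derives it inside each of those two lemmas.
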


Note that the best constant approximation factor algorithm for the unrestricted assigned version, was $15(1+2\epsilon)$, with the polynomial running time in input size and $\dfrac{1}{\epsilon}$ \cite{guh}.

Our results are summarized  in Table \ref{maintable}. Note that the empty places for the running times are due to the fact that they depend on a $(1+\epsilon)$-approximation algorithm used for the $k$-center problem of certain points.
\begin{table*}[h]
\caption{Our results for various versions of uncertain $k$-center}
\begin{center}
\begin{tabular}{ccccl}
\toprule
Objective&Metric& Running time& Assignment& Approx-Factor \\ \hline
    \midrule

1-center&Euclidean&$O(z)$&-&2\\ \hline

$k$-center&Euclidean&$O(nz+n\log k)$&restricted assigned version expected distance&6\\ \hline

$k$-center&Euclidean&-&restricted assigned version expected distance&$5+\epsilon$\\ \hline

$k$-center&Euclidean&$O(nz+n\log k)$&restricted assigned version expected point&4\\ \hline

$k$-center&Euclidean&-&restricted assigned version expected point&$3+\epsilon$\\ \hline

$k$-center&Euclidean&$O(nz+n\log k)$&unrestricted assigned version&4\\ \hline

$k$-center&Euclidean&-&unrestricted assigned version&$3+\epsilon$\\ \hline

$k$-center&$\mathbf{R}^1$& $O(zn \log zn + n \log k \log n)$&unrestricted assigned version&3\\ \hline

$k$-center&any metric&-&unrestricted assigned version &$5+\epsilon$\\ \hline

\bottomrule
\end{tabular}
\end{center}
\label{maintable}
\end{table*}

\section{Proofs}

In this section, we provide the proofs of the theorems stated in the previous section. First, we present two lemmas that are crucial for the rest of this section.

\begin{lemma}\label{expp}
For an uncertain point $P$ in a Euclidean space and any point $Q$, we have
\begin{flalign*}
&d(\bar{P},Q)\le Ed(P,Q)=\sum_{\hat{P}\in D} prob(\hat{P})d(\hat{P},Q)\\
\end{flalign*}
where $\bar{P}=\sum_{\hat{P}\in D} prob(\hat{P})\hat{P}$ is the expected point of $P$.
\end{lemma}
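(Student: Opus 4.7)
The plan is to recognize that the statement is essentially Jensen's inequality applied to the convex function $x \mapsto d(x,Q) = \|x-Q\|$ on Euclidean space. Equivalently, it follows from the triangle inequality for (finite) convex combinations of vectors.

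First I would rewrite the quantity $\bar{P}-Q$ using the fact that probabilities sum to one:
\begin{equation*}
\bar{P}-Q \;=\; \Bigl(\sum_{\hat{P}\in D} prob(\hat{P})\,\hat{P}\Bigr) - Q \;=\; \sum_{\hat{P}\in D} prob(\hat{P})\,(\hat{P}-Q),
\end{equation*}
where I used $\sum_{\hat{P}\in D} prob(\hat{P}) = 1$ to bring $Q$ inside the sum. This reduces the problem to bounding the norm of a convex combination of vectors.

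Next I would apply the triangle inequality (extended to finite sums) together with absolute homogeneity of the Euclidean norm:
\begin{equation*}
d(\bar{P},Q) \;=\; \Bigl\|\sum_{\hat{P}\in D} prob(\hat{P})\,(\hat{P}-Q)\Bigr\| \;\le\; \sum_{\hat{P}\in D} prob(\hat{P})\,\|\hat{P}-Q\| \;=\; \sum_{\hat{P}\in D} prob(\hat{P})\,d(\hat{P},Q),
\end{equation*}
where the nonnegativity of $prob(\hat{P})$ is what lets us pull the scalars out of the norm without sign changes. The right-hand side is by definition $Ed(P,Q)$, which completes the argument.

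There is no real obstacle here; the only thing to be careful about is that this argument uses that the ambient space is Euclidean (a normed vector space with a convex norm) so that $\bar{P}$ is even well-defined as a point and the norm satisfies the triangle inequality for arbitrary finite convex combinations. In a general metric space, $\bar{P}$ would have no meaning, which is precisely why later theorems replace $\bar{P}$ by the $1$-center $\tilde{P}$ in the metric setting.
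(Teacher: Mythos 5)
Your proof is correct and follows essentially the same route as the paper: both pull $Q$ inside the sum using $\sum_{\hat{P}\in D} prob(\hat{P})=1$ and then apply the triangle inequality and homogeneity of the Euclidean norm to the resulting convex combination. Your added remarks on why the argument needs a normed vector space are accurate but not part of the paper's proof.
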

\begin{proof}
Since, $d(\bar{P},Q)$ can be defined in terms of the Euclidean norm as $||\bar{P}-Q||$, using the triangle inequality
\begin{flalign*}
&||\bar{P}-Q||=||\sum_{\hat{P}\in D} prob(\hat{P})\hat{P}-Q||\\
&=||\sum_{\hat{P}\in D}prob(\hat{P})(\hat{P}-Q)||\le \sum_{\hat{P}\in D} prob(\hat{P})||\hat{P}-Q||\\
&=Ed(P,Q).\\
\end{flalign*}
\end{proof}

\begin{lemma}\label{exp}
For uncertain points $P_1,\dots, P_n$ , any $k$ centers $c_1,\dots, c_k$ and any assignment 
$A$,
we have
\begin{flalign*}
&Ecost_A(c_1,\dots, c_k)\geq \sum_{\hat{P}_1\in D_1}prob(\hat{P}_1)d(\hat{P}_1,A(P_1)).\\
\end{flalign*}
\end{lemma}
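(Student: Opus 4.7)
The plan is to bound $Ecost_A$ from below by replacing the maximum in its definition with one particular summand and then marginalize out the randomness of $P_2,\dots, P_n$ using independence.

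First I would start from the definition
\begin{flalign*}
&Ecost_A(c_1,\dots,c_k)=\sum_{R\in\Omega}prob(R)\max_{i=1,\dots,n}d(\hat{P}_i,A(P_i))
\end{flalign*}
and use the trivial inequality $\max_{i=1,\dots,n}d(\hat{P}_i,A(P_i))\ge d(\hat{P}_1,A(P_1))$ valid for every realization $R$. This gives
\begin{flalign*}
&Ecost_A(c_1,\dots,c_k)\ge\sum_{R\in\Omega}prob(R)\,d(\hat{P}_1,A(P_1)).
\end{flalign*}

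Next I would expand $R=\{\hat{P}_1,\dots,\hat{P}_n\}$ and $prob(R)=\prod_{i=1}^n prob(\hat{P}_i)$ and regroup the sum by pulling the $\hat{P}_1$-dependent factors to the outside:
\begin{flalign*}
&\sum_{R\in\Omega}prob(R)\,d(\hat{P}_1,A(P_1))=\sum_{\hat{P}_1\in D_1}prob(\hat{P}_1)\,d(\hat{P}_1,A(P_1))\prod_{i=2}^n\Bigl(\sum_{\hat{P}_i\in D_i}prob(\hat{P}_i)\Bigr).
\end{flalign*}
The inner product equals $1$ because each $\sum_{\hat{P}_i\in D_i}prob(\hat{P}_i)=1$, which finishes the argument.

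There is essentially no obstacle here; the only subtlety is carefully carrying out the marginalization that uses the independence of the uncertain points (encoded in the product form of $prob(R)$) and the fact that each distribution $D_i$ is a genuine probability distribution.
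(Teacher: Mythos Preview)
Your proof is correct and follows essentially the same approach as the paper: both bound the maximum below by the $i=1$ term and then marginalize over $\hat{P}_2,\dots,\hat{P}_n$. The only cosmetic difference is that the paper phrases the marginalization by partitioning $\Omega$ into the sets $\Omega(\hat{P}_1)$ of realizations with a fixed first coordinate and using $\sum_{R\in\Omega(\hat{P}_1)}prob(R)=prob(\hat{P}_1)$, whereas you expand the product form of $prob(R)$ directly.
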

\begin{proof} Let $\Omega(\hat{P}_1)$ be those realizations that $P_1$ is realized as $\hat{P}_1$. Then, $\sum_{R\in \Omega(\hat{P}_1)} prob(R)=prob(\hat{P}_1)$. We have
\begin{flalign*}
&Ecost_A(c_1,\dots,c_k)=\sum_{R\in \Omega}prob(R)\max_{i=1\dots, n} d(\hat{P}_i,A(P_i))\\
&=\sum_{\hat{P}_1\in D_1}\sum_{R\in \Omega(\hat{P}_1)} prob(R)\max_{i=1\dots, n} d(\hat{P}_i,A(P_i))\\
&\geq \sum_{\hat{P}_1\in D_1}\sum_{R\in \Omega(\hat{P}_1)} prob(R) d(\hat{P}_1,A(P_1))\\
&=\sum_{\hat{P}_1\in D_1}prob(\hat{P}_1)d(\hat{P}_1,A(P_1)).\\
\end{flalign*}
\end{proof}

\subsection*{Proof of Theorem \ref{1}} Let $c^*$ be the optimal $1$-center of $P_1,\dots, P_n$, we need to show that
\begin{flalign*}
&Ecost(\bar{P}_1)\le 2 Ecost(c^*).\\
\end{flalign*}
By the definition of Ecost,
\begin{flalign*}
&Ecost(\bar{P}_1)=\sum_{R\in \Omega} prob(R)\max_{i=1,\dots, n} d(\bar{P}_1,\hat{P}_i).\\
\end{flalign*}
By triangle inequality,
\begin{flalign*}
&\le \sum_{R\in \Omega} prob(R)\max_{i=1,\dots, n} (d(\bar{P}_1,c^*)+d(c^*,\hat{P}_i))\\
&= d(\bar{P}_1,c^*)+\sum_R prob(R)\max_{i=1,\dots, n} d(c^*, \hat{P}_i).\\
\end{flalign*}
By Lemma \ref{expp} and definition of Ecost,
\begin{flalign*}
&\le (\sum_{\hat{P}_1\in D_1}prob(\hat{P}_1)d(\hat{P}_1,c^*))+Ecost(c^*).\\
\end{flalign*}
By Lemma \ref{exp},
\begin{flalign*}
&\le 2 Ecost(c^*).\\
\end{flalign*}

\subsection*{Proof of Theorem \ref{2}}
To prove Theorem \ref{2}, the following two lemmas are needed.
\begin{lemma}
\label{two}
For a set of uncertain points $P_1,\dots, P_n$ in a Euclidean space, let $c_1,\dots, c_k$ be any $k$ centers and  $A:\{P_1,\dots,P_n\}\rightarrow\{c_1,\dots,c_k\}$ be any assignment, we have
\begin{flalign*}
\sum_{R\in \Omega} prob(R)\max_{i=1,\dots,n} d(\hat{P}_i,\bar{P}_i)\leq 2Ecost_A(c_1,\dots,c_k),\\
\end{flalign*}
in particular for any $1\le i\le n$,
\begin{flalign*}
&\sum_{\hat{P}_i\in D_i} prob(\hat{P}_i)d(\hat{P}_i,\bar{P}_i)\leq 2Ecost_A(c_1,\dots,c_k).\\
\end{flalign*}
\end{lemma}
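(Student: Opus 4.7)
The plan is to apply the triangle inequality through the assigned center $A(P_i)$, split the resulting expectation into two pieces, and then reduce the second piece to $Ecost_A$ using Lemmas \ref{expp} and \ref{exp}.

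First I would observe that for every realization $R$ and every index $i$, the triangle inequality gives
$$d(\hat{P}_i,\bar{P}_i) \leq d(\hat{P}_i,A(P_i)) + d(A(P_i),\bar{P}_i).$$
Taking the maximum over $i$ and using that the max of a sum is at most the sum of the maxes, I would obtain
$$\max_{i} d(\hat{P}_i,\bar{P}_i) \leq \max_{i} d(\hat{P}_i,A(P_i)) + \max_{i} d(A(P_i),\bar{P}_i).$$
Multiplying by $prob(R)$ and summing over $R\in\Omega$, the first term on the right becomes exactly $Ecost_A(c_1,\dots,c_k)$ by definition.

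The key observation for the second term is that $d(A(P_i),\bar{P}_i)$ does \emph{not} depend on the realization $R$, so $\sum_{R}prob(R)\max_{i}d(A(P_i),\bar{P}_i) = \max_{i}d(A(P_i),\bar{P}_i)$. Fixing an index $i^{*}$ that achieves this maximum, Lemma \ref{expp} (applied with the uncertain point $P_{i^{*}}$ and the fixed point $A(P_{i^{*}})$) yields
$$d(A(P_{i^{*}}),\bar{P}_{i^{*}}) \leq \sum_{\hat{P}_{i^{*}}\in D_{i^{*}}}prob(\hat{P}_{i^{*}})\,d(\hat{P}_{i^{*}},A(P_{i^{*}})),$$
and Lemma \ref{exp} then bounds the right-hand side by $Ecost_A(c_1,\dots,c_k)$. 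Adding the two pieces gives the factor of $2$.

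For the "in particular" statement, I would note that for any fixed $i$, the quantity $d(\hat{P}_i,\bar{P}_i)$ depends only on the realization of $P_i$, so marginalizing the joint distribution reduces $\sum_R prob(R)\,d(\hat{P}_i,\bar{P}_i)$ to $\sum_{\hat{P}_i\in D_i}prob(\hat{P}_i)\,d(\hat{P}_i,\bar{P}_i)$. Since $d(\hat{P}_i,\bar{P}_i) \leq \max_{j}d(\hat{P}_j,\bar{P}_j)$ pointwise in $R$, the particular case follows at once from the first inequality. The main obstacle is really just the correct bookkeeping around the max: one must be careful to note that the inner term $d(A(P_i),\bar{P}_i)$ is deterministic in order to strip away the expectation and then invoke the two auxiliary lemmas on a single uncertain point.
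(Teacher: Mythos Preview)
Your argument is correct and follows essentially the same route as the paper: triangle inequality through $A(P_i)$, identification of the first sum as $Ecost_A$, and then Lemmas~\ref{expp} and~\ref{exp} applied to the deterministic quantity $\max_i d(A(P_i),\bar P_i)$ to get the second $Ecost_A$. Your treatment of the ``in particular'' clause (by marginalization and the pointwise bound $d(\hat P_i,\bar P_i)\le \max_j d(\hat P_j,\bar P_j)$) is also correct and in fact more explicit than what the paper writes.
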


\begin{proof}
We have
\begin{flalign*}
&\sum_{R\in \Omega} prob(R)\max_{i=1,\dots,n} d(\hat{P}_i,\bar{P}_i)\\
&\leq \sum_{R\in \Omega} prob(R)\max_{i=1,\dots,n} (d(\hat{P}_i,A({P_i})+d(A(P_i),\bar{P}_i))\\
&\leq \sum_{R\in \Omega} prob(R)\max_{i=1,\dots,n} d(\hat{P}_i,A(P_i))+d(A(P_1),\bar{P}_1)\\
\end{flalign*}
where, we assume $d(A(P_1),\bar{P}_1)=\max_{i=1,\dots,n}d(A(P_i),\bar{P}_i)$, now the above term is

\begin{flalign*}
&=Ecost_A(c_1,\dots,c_k)+d(A(P_1),\bar{P}_1).\\
\end{flalign*}
It is enough to show $d(A(P_1),\bar{P}_1)\leq Ecost_A(c_1,\dots,c_k)$. But, according to Lemma \ref{exp}, we have
\begin{flalign*}
d(A(P_1),\bar{P}_1)\leq \sum_{\hat{P}_1\in D_1} prob(\hat{P}_1)d(A(P_1),\hat{P}_1).\\
\end{flalign*}
and from Lemma \ref{exp}, it follows that $d(A(P_1),\bar{P}_1)\leq Ecost_A(c_1,\dots, c_k)$.
\end{proof}
\begin{lemma}\label{three}
Let $P_1,\dots, P_n$ be a set of uncertain points for any $k$ centers $c_1,\dots, c_k$ and assignment $A$ one has
\begin{flalign*}
&cost(c_1,\dots, c_k)\le Ecost_A(c_1,\dots, c_k).\\
\end{flalign*}
where cost is for the certain points $\bar{P}_1,\dots, \bar{P}_n$.
\end{lemma}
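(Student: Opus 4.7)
The plan is to prove the inequality pointwise in $i$: for each index $i$ I will show that the distance from $\bar{P}_i$ to its nearest center is at most $Ecost_A(c_1,\dots,c_k)$, and then take the maximum over $i$. The key insight is that once a distance of the form $d(\bar{P}_i, A(P_i))$ is written down, Lemma \ref{expp} pushes this inside the expectation over realizations of $P_i$, and Lemma \ref{exp} then shows that such a one-point expected distance is dominated by the full expected maximum cost.

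First, I would use the trivial observation that $A(P_i)\in\{c_1,\dots,c_k\}$, so
\begin{flalign*}
\min_{j=1,\dots,k} d(\bar{P}_i,c_j)\le d(\bar{P}_i,A(P_i)).
\end{flalign*}
Second, applying Lemma \ref{expp} with $P=P_i$ and $Q=A(P_i)$ yields
\begin{flalign*}
d(\bar{P}_i,A(P_i))\le \sum_{\hat{P}_i\in D_i} prob(\hat{P}_i)\,d(\hat{P}_i,A(P_i)).
\end{flalign*}
Third, I would invoke Lemma \ref{exp}, noting that although the statement there is written for $i=1$, the proof is symmetric in the index and applies equally well to any $i$, giving
\begin{flalign*}
\sum_{\hat{P}_i\in D_i} prob(\hat{P}_i)\,d(\hat{P}_i,A(P_i))\le Ecost_A(c_1,\dots,c_k).
\end{flalign*}
Chaining these three inequalities shows $\min_j d(\bar{P}_i,c_j)\le Ecost_A(c_1,\dots,c_k)$ for every $i$, and taking the maximum on the left gives exactly $cost(c_1,\dots,c_k)\le Ecost_A(c_1,\dots,c_k)$.

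There is no real obstacle here: the argument is a straightforward three-step chain using the two preparatory lemmas. The only thing to be slightly careful about is making the index-symmetry in Lemma \ref{exp} explicit, so that the reader is comfortable applying it to an arbitrary $i$ rather than only $i=1$. Another small point worth a remark is that the lemma does not require $A$ to be the best assignment for the $\bar{P}_i$'s against $c_1,\dots,c_k$; the inequality holds for \emph{any} assignment $A$, because on the left we take the pointwise minimum over all centers, which is automatically no larger than the distance to $A(P_i)$.
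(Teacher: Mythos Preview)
Your proof is correct and follows essentially the same route as the paper: the paper picks the index $j$ achieving $cost(c_1,\dots,c_k)=d(c_i,\bar{P}_j)$, replaces $c_i$ by $A(P_j)$ using that $c_i$ is nearest, then applies Lemma~\ref{expp} followed by Lemma~\ref{exp}. The only cosmetic difference is that you argue for every $i$ and then take the maximum, whereas the paper fixes the maximizing index up front.
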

\begin{proof}
One has
\begin{flalign*}
&cost(c_1,\dots, c_k)=d(c_i,\bar{P}_j).\\
\end{flalign*}
Since, $c_i$ is the closest center to $\bar{P}_j$ and by Lemma \ref{expp},
\begin{flalign*}
&\le d(A(P_j),\bar{P}_j)\le \sum_{\hat{P}_j\in D_j} prob(\hat{P}_j)d(\hat{P}_j, A(P_j))\\
\end{flalign*}
and by Lemma \ref{exp},
\begin{flalign*}
&\le Ecost_A(c_1,\dots, c_k).\\
\end{flalign*}
\end{proof}

Now, we present the proof of Theorem \ref{2} for the expected distance assignment.  Let $c^*_1,\dots, c^*_k$ be the optimal solution for restricted assigned version of $k$-center problem with the expected distance assignment. We need to show
\begin{flalign*}
&Ecost_{ED}(c_1,\dots,c_k)\le (5+\epsilon) Ecost_{ED}(c^*_1,\dots,c^*_k).\\
\end{flalign*}
By definition,
\begin{flalign*}
&Ecost_{ED}(c_1,\dots, c_k)=\sum_R prob(R)\max_{i=1,\dots, n} d(\hat{P}_i, ED(P_i)).\\
\end{flalign*}
By triangle inequality,
\begin{flalign*}
&\le \sum_R prob(R)\max_{i=1,\dots, n} (d(\hat{P}_i, \bar{P}_i)+d(\bar{P}_i,ED(P_i))).\\
\end{flalign*}
If we let $d(\bar{P}_1,ED(P_1))=\max_{i=1\dots, n} d(\bar{P}_i,ED(P_i))$ and use Lemma \ref{two}, 
\begin{flalign*}
&\le 2Ecost_{ED}(c^*_1,\dots,c^*_k)+d(\bar{P}_1,ED(P_1))\\
\end{flalign*}
So, we need to show that
\begin{flalign*}
&d(\bar{P}_1,ED(P_1))\le (3+\epsilon)Ecost_{ED}(c^*_1,\dots,c^*_k).\\
\end{flalign*}
Let $c_i$ be the closest point among $\{c_1,\dots, c_k\}$ to $\bar{P}_1$. Then, by Lemma \ref{expp},
\begin{flalign*}
&d(\bar{P}_1,ED(P_1))\le \sum_{\hat{P}_1\in D_1} prob(\hat{P}_1)d(\hat{P}_1,ED(P_1)).\\
\end{flalign*}
Since, $ED(P_1)$ has the closest expected distance to $P_1$ among $c_1,\dots, c_k$, 
\begin{flalign*}
&\le \sum_{\hat{P}_1\in D_1} prob(\hat{P}_1)d(\hat{P}_1,c_i).\\
\end{flalign*}
By triangle inequality,
\begin{flalign*}
&\le (\sum_{\hat{P}_1\in D_1} prob(\hat{P}_1)d(\hat{P}_1,\bar{P}_1))+d(\bar{P}_1,c_i).\\
\end{flalign*}
By Lemma \ref{two},
\begin{flalign*}
&\sum_{\hat{P}_1} prob(\hat{P}_1)d(\hat{P}_1,\bar{P}_1)\le 2Ecost_{ED}(c^*_1,\dots, c^*_k).\\
\end{flalign*}
So, it remains to show
\begin{flalign*}
&d(\bar{P}_1,c_i)\le (1+\epsilon)Ecost_{ED}(c^*_1,\dots, c^*_k).\\
\end{flalign*}
Since, $c_i$ is the closest center to $\bar{P}_1$ we have
\begin{flalign*}
&d(\bar{P}_1,c_i)\le cost(c_1,\dots,c_k)\\
\end{flalign*}
where cost is for the certain points $\bar{P}_1,\dots, \bar{P}_n$. Since, $c_1,\dots, c_k$ is a $(1+\epsilon)$-approximation solution for the $k$-center problem,
\begin{flalign*}
&\le (1+\epsilon)cost(c^*_1,\dots, c^*_k)\\
\end{flalign*}
and by Lemma \ref{three},
\begin{flalign*}
&\le (1+\epsilon) Ecost_{ED}(c^*_1,\dots,c^*_k).
\end{flalign*}
So, Theorem \ref{2} for the expected distance assignment is proved.

Now, we give the proof of Theorem \ref{2} for the the expected point assignment.
Let $c^*_1,\dots, c^*_k$ be the optimal solution for the restricted assigned $k$-center problem for the expected point assignment. We need to show
\begin{flalign*}
&Ecost_{EP}(c_1,\dots, c_k)\le (3+\epsilon) Ecost_{EP}(c^*_1,\dots, c^*_k).\\
\end{flalign*}
By definition,
\begin{flalign*}
&Ecost_{EP}(c_1,\dots, c_k)=\sum_R prob(R)\max_{i=1,\dots, n} d(\hat{P}_i, EP(P_i)).\\
\end{flalign*}
By triangle inequality,
\begin{flalign*}
&\le \sum_R prob(R)\max_{i=1,\dots, n} (d(\hat{P}_i, \bar{P}_i)+d(\bar{P}_i,EP(P_i)).\\
\end{flalign*}
If we let $d(\bar{P}_1,ED(P_1))=\max_{i=1\dots, n}d(\bar{P}_i,ED(P_i))$ and use Lemma \ref{two},
\begin{flalign*}
&\le 2Ecost_{ED}(c^*_1,\dots,c^*_k)+d(\bar{P}_1,EP(P_1)).\\
\end{flalign*}
So, we need to show that
\begin{flalign*}
&d(\bar{P}_1,EP(P_1))\le (1+\epsilon)Ecost_{ED}(c^*_1,\dots,c^*_k).\\
\end{flalign*}
By definition of expected point assignment,
\begin{flalign*} 
&d(\bar{P}_1,EP(P_1))=cost(c_1,\dots, c_k)\\
\end{flalign*}
 and since, $c_1,\dots, c_k$ is a $(1+\epsilon)$-approximation solution,
 \begin{flalign*}
&cost(c_1,\dots,c_k)\le (1+\epsilon)cost(c^*_1,\dots, c^*_k)\\
\end{flalign*}
and by Lemma \ref{three},
\begin{flalign*}
&\le (1+\epsilon)Ecost_{EP}(c^*_1,\dots,c^*_k).\\
\end{flalign*}
So, Theorem \ref{2} is completely proved.

\begin{remark} There is a greedy 2-approximation algorithm for deterministic $k$-center problem of certain points $\bar{P}_1,\dots, \bar{P}_n$ in a metric space given in \cite{gon}. It is as follows. First, choose any point, say $\bar{P}_1$ and then choose the farthest point from $\bar{P}_1$, say $\bar{P}_2$ and then, the farthest point from the set $\{\bar{P}_1,\bar{P}_2\}$, say $\bar{P}_3$ and continue until finding the farthest point from the set $\{\bar{P}_1,\dots,\bar{P}_{k-1}\}$, say $\bar{P}_k$. Then, the points $\bar{P}_1,\dots, \bar{P}_k$ is a 2-approximation solution for the deterministic $k$-center problem. If we use this method,
in the first phase of the algorithm, we compute the expected point of each probabilistic point which takes $O(nz)$. Next, we compute $\bar{P}_1,\dots,\bar{P}_k$, The running time of this phase is $O(n\log k)$ \cite{fed}.
So, the overall running time of algorithm is $O(nz+n\log k)$ and we get respectively a 6 and 4 approximation for the optimal expected cost of the $k$-center problem for the expected distance and expected point assignments.

\end{remark}
\subsection*{Proof of Theorem \ref{3}}
 Let $c_1,\dots, c_k$ be the optimal solution for the restricted assigned $k$-center problem with expected distance assignment. Let $c^*_1,\dots, c^*_k$ and assignment $A$ be the optimal solution for the unrestricted assigned $k$-center problem. Then,
 \begin{flalign*}
&Ecost_{ED}(c_1,\dots, c_k)\le Ecost_{ED}(c^*_1,\dots,c^*_k)\\
&=\sum_R prob(R)\max_{i=1,\dots, n} d(\hat{P}_i, ED(P_i))\\
&\le \sum_R prob(R)\max_{i=1,\dots, n}(d(\hat{P}_i,A(P_i))+d(A(P_i),ED(P_i)))\\
&\le Ecost_A(c^*_1,\dots, c^*_k)+d(A(P_1),ED(P_1))\\
\end{flalign*}
where $d(A(P_1),ED(P_1))=\max_{i=1\dots, n} d(A(P_i),ED(P_i))$.
By triangle inequality,
\begin{flalign*}
&d(A(P_1),ED(P_1))\\
&\le \sum_{\hat{P}_1\in D_1} prob(\hat{P}_1)(d(A(P_1),\hat{P}_1)+d(\hat{P}_1, ED(P_1))\\
\end{flalign*}
By Lemma \ref{exp} and the fact that $ED(P_1)$ has the smallest expected distance from $P_1$ among $c^*_1,\dots, c^*_k$, we get
\begin{flalign*}
&\le Ecost_A(c^*_1,\dots, c^*_k)+\sum_{\hat{P}_1\in D_1} prob(\hat{P}_1)d(\hat{P}_1,A(P_1))\\
&\le 2 Ecost_A(c^*_1,\dots,c^*_k).\\
\end{flalign*}
So, Theorem \ref{3} is proved.
\subsection*{Proof of Theorem \ref{4}}

By definition,
\begin{flalign*}
&Ecost_{ED}(c_1,\dots, c_k)=\sum_R prob(R)\max_{i=1\dots, n} d(\hat{P}_i, ED(P_i)).\\
\end{flalign*}
By triangle inequality,
\begin{flalign*}
&\le \sum_R prob(R)\max_{i=1\dots, n}(d(\hat{P}_i, A(P_i))+d(A(P_i),ED(P_i)))\\
&\le Ecost_A(a_1,\dots,a_k)+d(A(P_1),ED(P_1))\\
\end{flalign*}
where $d(A(P_1),ED(P_1))=\max_{i=1\dots, n} d(A(P_i),ED(P_i))$.
Now by triangle inequality and Lemma \ref{expp},
\begin{flalign*}
&d(A(P_1),ED(P_1))\le d(A(P_1),\bar{P}_1)+d(\bar{P}_1,ED(P_1))\\
&\le \sum_{\hat{P}_1}prob(\hat{P}_1)d(A(P_1),\hat{P}_1)+\sum_{\hat{P}_1}prob(\hat{P}_1)d(\hat{P}_1,ED(P_1)).\\
\end{flalign*}
Let $c_1$ be a center among $c_1,\dots, c_k$ that is closest to $\bar{P}_1$. By Lemma \ref{exp} and the fact that $ED(P_1)$ has the closest expected distance to $P_1$ among the centers we get
\begin{flalign*}
&\le Ecost_A(c^*_1,\dots ,c^*_k)+\sum_{\hat{P}_1} prob(\hat{P}_1)d(\hat{P}_1,c_1).\\
\end{flalign*}
If instead of $d(\hat{P}_1,c_1)$, we put $d(\hat{P}_1,A(P_1))+d(A(P_1),c_1)$ and use Lemma \ref{exp}, we get
\begin{flalign*}
&\le 2 Ecost_A(c^*_1,\dots, c^*_k)+d(A(P_1),c_1).\\
\end{flalign*}
Now,
\begin{flalign*}
&d(A(P_1),c_1)\le d(A(P_1),\bar{P}_1)+d(\bar{P}_1,c_1)\\
&\le \sum_{\hat{P}_1}prob(\hat{P}_1)d(\hat{P}_1,A(P_1))+cost(c_1,\dots, c_k).\\
\end{flalign*}
By Lemma \ref{exp} and the fact that $c_1,\dots,c_k$ is a $(1+\epsilon)$-approximation solution for the $k$-center problem,
\begin{flalign*}
&\le Ecost_A(c^*_1,\dots, c^*_k)+(1+\epsilon)cost(c^*_1,\dots,c^*_k).\\
\end{flalign*}
Finally, by Lemma \ref{three},
\begin{flalign*}
&\le (2+\epsilon)Ecost_A(c^*_1,\dots,c^*_k).\\
\end{flalign*}
This proves Theorem \ref{4}.
\subsection*{Proof of Theorem \ref{5}}
By definition,
\begin{flalign*}
&Ecost_{EP}(c_1,\dots, c_k)\\
&=\sum_{R\in \Omega} prob(R)\max_{i=1,\dots, n} d(\hat{P}_i, EP(P_i))\\
&\le \sum_{R\in \Omega} prob(R)\max_{i=1,\dots, n} (d(\hat{P}_i,\bar{P}_i)+d(\bar{P}_i,EP(P_i))).\\
\end{flalign*}
If we let $d(\bar{P}_1,EP(P_1))=\max_{i=1\dots, n}d(\bar{P}_i,EP(P_i))$ and use Lemma \ref{two} we get
\begin{flalign*}
&\le 2Ecost_A(c^*_1,\dots,c^*_k)+d(\bar{P}_1,EP(P_1)).\\
\end{flalign*}
Now, by Lemma \ref{expp},
\begin{flalign*}
\\d(\bar{P}_1, EP(P_1))\le \sum_{\hat{P}_1}prob(\hat{P}_1)d(\hat{P}_1,EP(P_1)).\\
\end{flalign*}
Since, 
\begin{flalign*}
&d(\hat{P}_1,EP(P_1))=cost(c_1,\dots ,c_k)\\
&\le (1+\epsilon)cost(c^*_1,\dots,c^*_k),\\
\end{flalign*}
also by Lemma \ref{three},
\begin{flalign*}
&\le (1+\epsilon)Ecost_A(c^*_1,\dots,c^*_k),\\
\end{flalign*}
this proves Theorem \ref{5}.
\subsection*{Proofs of Theorem \ref{6} and Theorem \ref{7}}

To prove theorems \ref{6} and \ref{7}, we need two lemmas that are analogue of Lemmas \ref{two} and \ref{three} for a metric space.
\begin{lemma}\label{four}
Let $P_1,\dots, P_n$ be a set of uncertain points in a metric space. Let $\tilde{P}_i$ be the 1-center for the single uncertain point $P_i$. For any set of centers $c_1,\dots, c_k$ and any assignment $A:\{P_1,\dots, P_n\}\rightarrow \{c_1,\dots, c_k\}$ we have
\begin{flalign*}
&\sum_R prob(R)\max_{i=1,\dots, n}d(\hat{P}_i,\tilde{P}_i)\le 3 Ecost_A(c_1,\dots,c_k).\\
\end{flalign*}
\end{lemma}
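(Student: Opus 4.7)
The plan is to mimic the proof of Lemma \ref{two}, but to pay an extra factor because in a general metric space we no longer have the Euclidean identity of Lemma \ref{expp} relating $d(Q,\bar{P})$ to $\sum prob(\hat{P})d(Q,\hat{P})$. First, I would apply the triangle inequality through the center $A(P_i)$ to get, for each realization and each $i$, $d(\hat{P}_i,\tilde{P}_i)\le d(\hat{P}_i,A(P_i))+d(A(P_i),\tilde{P}_i)$. Taking the maximum over $i$ and using $\max_i(a_i+b_i)\le \max_i a_i+\max_i b_i$, then weighting by $prob(R)$ and summing, I obtain
$$\sum_R prob(R)\max_{i} d(\hat{P}_i,\tilde{P}_i)\le Ecost_A(c_1,\dots,c_k)+\max_i d(A(P_i),\tilde{P}_i).$$

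Relabel so the outer maximum on the right is attained at $i=1$; the task reduces to showing $d(A(P_1),\tilde{P}_1)\le 2\,Ecost_A(c_1,\dots,c_k)$. I would bound this by again routing through a realization $\hat{P}_1$: for every $\hat{P}_1\in D_1$ the triangle inequality gives $d(A(P_1),\tilde{P}_1)\le d(A(P_1),\hat{P}_1)+d(\hat{P}_1,\tilde{P}_1)$, and averaging the constant left-hand side against $prob(\hat{P}_1)$ yields
$$d(A(P_1),\tilde{P}_1)\le \sum_{\hat{P}_1\in D_1} prob(\hat{P}_1)\,d(A(P_1),\hat{P}_1)+\sum_{\hat{P}_1\in D_1} prob(\hat{P}_1)\,d(\hat{P}_1,\tilde{P}_1).$$

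The first sum is $\le Ecost_A(c_1,\dots,c_k)$ directly by Lemma \ref{exp}. For the second sum I would use the defining property of $\tilde{P}_1$: as the $1$-center of the single uncertain point $P_1$, it minimizes $\sum_{\hat{P}_1}prob(\hat{P}_1)d(\hat{P}_1,Q)$ over all $Q$ in the metric space, so in particular its value does not exceed what is obtained by substituting $Q=A(P_1)$. Hence the second sum is $\le \sum_{\hat{P}_1} prob(\hat{P}_1)\,d(\hat{P}_1,A(P_1))\le Ecost_A(c_1,\dots,c_k)$, again by Lemma \ref{exp}. Adding the two bounds gives $d(A(P_1),\tilde{P}_1)\le 2\,Ecost_A(c_1,\dots,c_k)$, and combined with the first display this yields the desired $3\,Ecost_A(c_1,\dots,c_k)$ bound.

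The main obstacle, and the reason for the factor $3$ rather than $2$, is precisely the absence of a barycenter: in the Euclidean proof of Lemma \ref{two} one passes in a single step from $d(A(P_1),\bar{P}_1)$ to $\sum prob(\hat{P}_1)d(A(P_1),\hat{P}_1)$ via Lemma \ref{expp}, but here the analogous estimate on $d(A(P_1),\tilde{P}_1)$ requires inserting $\hat{P}_1$ by an extra triangle inequality, producing the additional $\sum prob(\hat{P}_1)d(\hat{P}_1,\tilde{P}_1)$ term that is controlled only through the $1$-center optimality of $\tilde{P}_1$. Everything else is a straightforward repetition of the Euclidean argument.
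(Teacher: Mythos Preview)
Your proof is correct and follows essentially the same route as the paper's: split via the triangle inequality through $A(P_i)$ to isolate $Ecost_A$ plus $\max_i d(A(P_i),\tilde{P}_i)$, then bound the latter by averaging a second triangle inequality through $\hat{P}_1$ and invoking the $1$-center optimality of $\tilde{P}_1$ together with Lemma~\ref{exp}. Your explicit remark that the extra factor (relative to Lemma~\ref{two}) arises from the lack of Lemma~\ref{expp} in a general metric space is exactly the point.
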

\begin{proof}
Let $d(A(P_1),\tilde{P}_1)=\max_{i=1\dots, n} d(A(P_i),\tilde{P}_i)$. If we use $d(\hat{P}_i,\tilde{P}_i)\le d(\hat{P}_i,A(P_i))+d(A(P_i),\tilde{P}_i)$, we get that the left hand side is
\begin{flalign*}
&\le Ecost_A(c_1,\dots,c_k)+d(A(P_1),\tilde{P}_1).\\
\end{flalign*}
By triangle inequality,
\begin{flalign*}
&d(A(P_1),\tilde{P}_1)\\
&\le \sum prob(\hat{P}_1)d(A(P_1),\hat{P}_1)+\sum prob(\hat{P}_1)d(\hat{P}_1,\tilde{P}_1).\\
\end{flalign*}
Since, $\tilde{P}_1$ is 1-center we get
\begin{flalign*}
&\le 2\sum prob(\hat{P}_1)d(A(P_1),\hat{P}_1),\\
\end{flalign*}
and by Lemma \ref{exp},
\begin{flalign*}
&\le 2 Ecost_A(c_1,\dots, c_k).\\
\end{flalign*}
This proves the lemma.
\end{proof}

\begin{lemma}\label{five}
Let $P_1,\dots, P_n$ be a set of uncertain points in a metric space. For any $k$ centers $c_1,\dots, c_k$ and assignment $A$ one has
\begin{flalign*}
cost(c_1,\dots, c_k)\le 2Ecost_A(c_1,\dots, c_k).\\
\end{flalign*}
where cost is for the certain points $\tilde{P}_1,\dots, \tilde{P}_n$, where $\tilde{P}_i$ is the 1-center of the uncertain point $P_i$.
\end{lemma}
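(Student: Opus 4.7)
The plan is to isolate the index $j$ that achieves the outer maximum in the definition of $cost(c_1,\dots,c_k)$, compare $d(\tilde{P}_j,c_{i^*})$ with $d(\tilde{P}_j,A(P_j))$ for any assignment $A$, and then translate the latter into an average over realizations of $P_j$ using the 1-center optimality of $\tilde{P}_j$ as the key savings.

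Concretely, first I would pick an index $j$ attaining $\max_{1\le i\le n}\min_\ell d(\tilde{P}_i,c_\ell)$ and let $c_{i^*}$ be the closest of $c_1,\dots,c_k$ to $\tilde{P}_j$; since $A(P_j)$ is one of those centers we immediately get $cost(c_1,\dots,c_k)=d(\tilde{P}_j,c_{i^*})\le d(\tilde{P}_j,A(P_j))$. Next I would insert an arbitrary realization $\hat{P}_j$ via the triangle inequality and take the expectation over $D_j$, yielding
\begin{align*}
d(\tilde{P}_j,A(P_j))\le \sum_{\hat{P}_j\in D_j}prob(\hat{P}_j)d(\tilde{P}_j,\hat{P}_j)+\sum_{\hat{P}_j\in D_j}prob(\hat{P}_j)d(\hat{P}_j,A(P_j)).
\end{align*}
The decisive step is to observe that $\tilde{P}_j$, being the 1-center of the single uncertain point $P_j$, minimizes the expected distance from $P_j$ among all points of the metric space; in particular its expected distance to $P_j$ is at most that of $A(P_j)$, so the first sum is dominated by the second. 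This collapses the right-hand side to $2\sum_{\hat{P}_j}prob(\hat{P}_j)d(\hat{P}_j,A(P_j))$, and finally Lemma \ref{exp} bounds this by $2Ecost_A(c_1,\dots,c_k)$.

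The main obstacle, and the reason the conclusion is a factor of $2$ rather than $1$ as in the Euclidean case, is finding the right surrogate for Lemma \ref{expp}. In Lemma \ref{three} the Euclidean identity $d(\bar{P}_j,Q)\le Ed(P_j,Q)$ was what allowed the certain-point cost to be absorbed with no loss, but in a general metric space no such contraction is available. The 1-center optimality of $\tilde{P}_j$ is the natural replacement: it gives the analogous inequality in expectation (rather than pointwise for each $Q$), which is enough to close the argument at the price of one extra triangle inequality.
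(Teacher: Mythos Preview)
Your proof is correct and essentially identical to the paper's: both pick the index $j$ realizing the cost, use $d(\tilde{P}_j,c_{i^*})\le d(\tilde{P}_j,A(P_j))$, split via the triangle inequality, invoke the 1-center optimality of $\tilde{P}_j$ to bound the first sum by the second, and finish with Lemma~\ref{exp}. Your closing paragraph comparing with Lemma~\ref{three} and explaining the loss of the factor $2$ is a nice addition that the paper does not make explicit.
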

\begin{proof}
Let 
$$cost(c_1,\dots,c_k)=d(c_i,\tilde{P}_j)$$
Then, since $c_i$ is the closest center to $\tilde{P}_j$,
\begin{flalign*}
&\le d(A(P_j),\tilde{P}_j)\\
\end{flalign*}
by triangle inequality,
\begin{flalign*}
&\le \sum prob(\hat{P}_j)d(A(P_j),\hat{P}_j)+\sum prob(\hat{P}_j)d(\hat{P}_j,\tilde{P}_j).\\
\end{flalign*}
since, $\tilde{P}_j$ is 1-center of $P_j$,
\begin{flalign*}
&\le 2\sum prob(\hat{P}_j)d(A(P_j),\hat{P}_j)\\
&\le 2Ecost_A(c_1,\dots,c_k)\\
\end{flalign*}
So, the lemma is proved.
\end{proof}
We now prove Theorem \ref{6}.
By definition,
\begin{flalign*}
&Ecost_{ED}(c_1,\dots, c_k)=\sum_{R\in \Omega}prob(R)\max_{i=1\dots, n} d(\hat{P}_i,ED(P_i))\\
&\leq \sum_{R\in \Omega}prob(R)\max_{i=1\dots, n} d(\hat{P}_i,\tilde{P}_i)+d(\tilde{P}_1,ED(P_1)).\\
\end{flalign*}
Where $d(\tilde{P}_1,ED(P_1))=\max_{i=1,\dots k}d(\tilde{P}_i,ED(P_i))$. Since, by Lemma \ref{four}, the first term is at most $3Ecost_A(c^*_1,\dots, c^*_k)$, it is enough to show $$d(\tilde{P}_1,ED(P_1))\leq (4+2\epsilon)Ecost_A(c^*_1,\dots, c^*_k).$$ Now by triangle inequality and the fact that $\tilde{P}_1$ is 1-center of $P_1$ we get
\begin{flalign*}
&d(\tilde{P}_1,ED(P_1))\\
&\leq \sum_{\hat{P}_1\in D_1} prob(\hat{P}_1)(d(\tilde{P}_1,\hat{P}_1)+d(\hat{P}_1,ED(P_1)))\\
&\leq \sum_{\hat{P}_1\in D_1} prob(\hat{P}_1)\left(d(A(P_1),\hat{P}_1)+d(\hat{P}_1,ED(P_1))\right)\\
&\leq Ecost_A(c^*_1,\dots ,c^*_k)+\sum_{\hat{P}_1\in D_1}  prob(\hat{P}_1)d(\hat{P}_1,ED(P_1))\\
&\leq Ecost_A(c^*_1,\dots ,c^*_k)+\sum_{\hat{P}_1\in D_1}  prob(\hat{P}_1)d(\hat{P}_1,c_j)\\
\end{flalign*}
where $c_j$ is the closest among $c_i$'s to $\tilde{P}_1$. Now,
\begin{flalign*}
&\sum_{\hat{P}_1\in D_1}  prob(\hat{P}_1)d(\hat{P}_1,c_j)\\
&\leq \sum_{\hat{P}_1\in D_1}  prob(\hat{P}_1)d(\hat{P}_1,\tilde{P_1}))+d(\tilde{P_1},c_j)\\
&\leq \sum_{\hat{P}_1\in D_1}  prob(\hat{P}_1)d(\hat{P}_1,A(P_1))+d(\tilde{P}_1,c_j)\\
&\leq Eِِِِِِcost_A(c^*_1,\dots ,c^*_k)+d(\tilde{P_1},c_j).\\
\end{flalign*}
Now, $d(\tilde{P_1},c_j)\le cost(c_1,\dots, c_k)$. Since, these centers are a $(1+\epsilon)$-approximation solution for the $k$-center problem,
\begin{flalign*}
&cost(c_1,\dots, c_k)\\
&\le (1+\epsilon) cost(c^*_1,\dots, c^*_k)\\
\end{flalign*}
by lemma \ref{five},
\begin{flalign*}
\le (2+2\epsilon)Ecost_A(c^*_1,\dots,c^*_k)\\
\end{flalign*}
and this finishes the proof of Theorem \ref{6}.

Finally, we prove Theorem \ref{7}.
By definition,
\begin{flalign*}
&Ecost_{OC}(c_1,\dots, c_k)=\sum_R prob(R)\max_{i=1\dots, n} d(P_i, OC(P_i)).\\
\end{flalign*}
By triangle inequality,
\begin{flalign*}
&\le \sum_R prob(R)\max_{i=1\dots, n} (d(P_i, \tilde{P}_i)+d(\tilde{P}_i,OC(P_i))).\\
\end{flalign*}
By Lemma \ref{four},
\begin{flalign*}
&\le 3Ecost_A(c^*_1,\dots, c^*_k)+d(\tilde{P}_1,OC(P_1))\\
\end{flalign*}
where $d(\tilde{P}_1,OC(P_1))=\max_{i=1\dots, n} d(\tilde{P}_i,OC(P_i))$.
Now,
\begin{flalign*}
&d(\tilde{P}_1,OC(P_1))=cost(c_1,\dots,c_k)\le (1+\epsilon)cost(c^*_1,\dots, c^*_k)\\
\end{flalign*}
and by lemma \ref{five},
\begin{flalign*}
 \le (2+2\epsilon)Ecost_A(c^*_1,\dots,c^*_k)
 \end{flalign*}
and this finishes the proof.

\section{conclusion}
In this paper the $k$-center problem for uncertain data points have been studied. We have proposed new assignment schemes and obtained improved constant approximation factor algorithms for them. Note that, the new assignments introduced in this paper allowed us to improve the approximation factor for the unrestricted assigned version.

The restricted version with expected distance assignment for $\mathbf{R}^1$ was studied in \cite{wang}. Here we gave approximation algorithms for $\mathbf{R}^d$ and also for any metric space.

The case of unrestricted assigned version which was studied in \cite{guh}, has been improved. The constant of approximation has been reduced to $5+\epsilon$  from $15+\epsilon$. We have also separately studied the case for the metric space and the Euclidean space.
In a future work, we intend to use our approach to study the $k$-median and the $k$-mean problems.

 Also, we intend to give a PTAS for the assigned versions of the uncertain $k$-center problem. 

\section*{Acknowledgment}
 The authors would like to thank Dr. Mohammad Ali Abam who introduced the problem to them and gave valuable comments and helpful suggestions.




\bibliographystyle{abbrv} 
\bibliography{sigproc}

\begin{thebibliography}{10}

\bibitem{aga}
P.~K. Agarwal and C.~M. Procopiuc.
\newblock Exact and approximation algorithms for clustering.
\newblock {\em Algorithmica}, 33(2):201--226, 2002.

\bibitem{caga}
C.~C. Aggarwal and P.~S. Yu.
\newblock A survey of uncertain data algorithms and applications.
\newblock {\em {IEEE} Trans. Knowl. Data Eng.}, 21(5):609--623, 2009.

\bibitem{arya}
V.~Arya, N.~Garg, R.~Khandekar, A.~Meyerson, K.~Munagala, and V.~Pandit.
\newblock Local search heuristics for k-median and facility location problems.
\newblock {\em {SIAM} J. Comput.}, 33(3):544--562, 2004.

\bibitem{bad}
M.~Badoiu, S.~Har{-}Peled, and P.~Indyk.
\newblock Approximate clustering via core-sets.
\newblock In {\em Proceedings on 34th Annual {ACM} Symposium on Theory of
  Computing, May 19-21, 2002, Montr{\'{e}}al, Qu{\'{e}}bec, Canada}, pages
  250--257, 2002.

\bibitem{chan9}
R.~Chandrasekaran and A.~Tamir.
\newblock Polynomially bounded algorithms for locating \emph{p}-centers on a
  tree.
\newblock {\em Math. Program.}, 22(1):304--315, 1982.

\bibitem{chan}
R.~Chandrasekaran and A.~Tamir.
\newblock Algebraic optimization: The fermat-weber location problem.
\newblock {\em Math. Program.}, 46:219--224, 1990.

\bibitem{cor}
G.~Cormode and A.~McGregor.
\newblock Approximation algorithms for clustering uncertain data.
\newblock In {\em Proceedings of the Twenty-Seventh {ACM}
  {SIGMOD-SIGACT-SIGART} Symposium on Principles of Database Systems, {PODS}
  2008, June 9-11, 2008, Vancouver, BC, Canada}, pages 191--200, 2008.

\bibitem{drez}
Z.~Drezner and H.~W. Hamacher.
\newblock {\em Facility location - applications and theory}.
\newblock Springer, 2002.

\bibitem{dyer}
M.~E. Dyer.
\newblock On a multidimensional search technique and its application to the
  euclidean one-centre problem.
\newblock {\em {SIAM} J. Comput.}, 15(3):725--738, 1986.

\bibitem{est}
M.~Ester, H.~Kriegel, J.~Sander, and X.~Xu.
\newblock A density-based algorithm for discovering clusters in large spatial
  databases with noise.
\newblock In {\em Proceedings of the Second International Conference on
  Knowledge Discovery and Data Mining (KDD-96), Portland, Oregon, {USA}}, pages
  226--231, 1996.

\bibitem{fed}
T.~Feder and D.~H. Greene.
\newblock Optimal algorithms for approximate clustering.
\newblock In {\em Proceedings of the 20th Annual {ACM} Symposium on Theory of
  Computing, May 2-4, 1988, Chicago, Illinois, {USA}}, pages 434--444, 1988.

\bibitem{fred}
G.~N. Frederickson.
\newblock Parametric search and locating supply centers in trees.
\newblock In {\em Algorithms and Data Structures, 2nd Workshop {WADS} '91,
  Ottawa, Canada, August 14-16, 1991, Proceedings}, pages 299--319, 1991.

\bibitem{gon}
T.~F. Gonzalez.
\newblock Clustering to minimize the maximum intercluster distance.
\newblock {\em Theor. Comput. Sci.}, 38:293--306, 1985.

\bibitem{guh}
S.~Guha and K.~Munagala.
\newblock Exceeding expectations and clustering uncertain data.
\newblock In {\em Proceedings of the Twenty-Eigth {ACM} {SIGMOD-SIGACT-SIGART}
  Symposium on Principles of Database Systems, {PODS} 2009, June 19 - July 1,
  2009, Providence, Rhode Island, {USA}}, pages 269--278, 2009.

\bibitem{har}
S.~Har{-}Peled and S.~Mazumdar.
\newblock On coresets for k-means and k-median clustering.
\newblock In {\em Proceedings of the 36th Annual {ACM} Symposium on Theory of
  Computing, Chicago, IL, USA, June 13-16, 2004}, pages 291--300, 2004.

\bibitem{lin}
L.~Huang and J.~Li.
\newblock Stochastic \emph{k}-center and \emph{j}-flat-center problems.
\newblock In {\em Proceedings of the Twenty-Eighth Annual {ACM-SIAM} Symposium
  on Discrete Algorithms, {SODA} 2017, Barcelona, Spain, Hotel Porta Fira,
  January 16-19}, pages 110--129, 2017.

\bibitem{kri1}
H.~Kriegel and M.~Pfeifle.
\newblock Density-based clustering of uncertain data.
\newblock In {\em Proceedings of the Eleventh {ACM} {SIGKDD} International
  Conference on Knowledge Discovery and Data Mining, Chicago, Illinois, USA,
  August 21-24, 2005}, pages 672--677, 2005.

\bibitem{kri2}
H.~Kriegel and M.~Pfeifle.
\newblock Hierarchical density-based clustering of uncertain data.
\newblock In {\em Proceedings of the 5th {IEEE} International Conference on
  Data Mining {(ICDM} 2005), 27-30 November 2005, Houston, Texas, {USA}}, pages
  689--692, 2005.

\bibitem{kumar}
P.~Kumar and P.~Kumar.
\newblock Almost optimal solutions to k-clustering problems.
\newblock {\em Int. J. Comput. Geometry Appl.}, 20(4):431--447, 2010.

\bibitem{lee}
D.~T. Lee and Y.~Wu.
\newblock Complexity of some laction problems.
\newblock {\em Algorithmica}, 1(2):193--211, 1986.

\bibitem{meg28}
N.~Megiddo.
\newblock Linear-time algorithms for linear programming in r\({}^{\mbox{3}}\)
  and related problems.
\newblock {\em {SIAM} J. Comput.}, 12(4):759--776, 1983.

\bibitem{meg}
N.~Megiddo and K.~J. Supowit.
\newblock On the complexity of some common geometric location problems.
\newblock {\em {SIAM} J. Comput.}, 13(1):182--196, 1984.

\bibitem{meg30}
N.~Megiddo and A.~Tamir.
\newblock New results on the complexity of p-center problems.
\newblock {\em {SIAM} J. Comput.}, 12(4):751--758, 1983.

\bibitem{meg31}
N.~Megiddo, A.~Tamir, E.~Zemel, and R.~Chandrasekaran.
\newblock An o(n log\({}^{\mbox{2}}\) n) algorithm for the k-th longest path in
  a tree with applications to location problems.
\newblock {\em {SIAM} J. Comput.}, 10(2):328--337, 1981.

\bibitem{mun}
A.~Munteanu, C.~Sohler, and D.~Feldman.
\newblock Smallest enclosing ball for probabilistic data.
\newblock In {\em 30th Annual Symposium on Computational Geometry, SOCG'14,
  Kyoto, Japan, June 08 - 11, 2014}, page 214, 2014.

\bibitem{wang}
H.~Wang and J.~Zhang.
\newblock One-dimensional k-center on uncertain data.
\newblock {\em Theor. Comput. Sci.}, 602:114--124, 2015.

\bibitem{xu}
H.~Xu and G.~Li.
\newblock Density-based probabilistic clustering of uncertain data.
\newblock In {\em International Conference on Computer Science and Software
  Engineering, {CSSE} 2008, Volume 4: Embedded Programming / Database
  Technology / Neural Networks and Applications / Other Applications, December
  12-14, 2008, Wuhan, China}, pages 474--477, 2008.

\end{thebibliography}

\end{document}